\newcommand{\rulesep}{\unskip\ \vrule\ }
\newcommand{\rulehorzsep}{\unskip\ \hrule\ }
\DeclareMathOperator*{\argmin}{\arg\!\min}
\newtheorem{thm}{Theorem}
\newtheorem{lem}{Lemma}
\newtheorem{prop}{Proposition}
\title{\LARGE \bf
	Safe Open-Loop Strategies for Handling Intermittent Communications in Multi-Robot Systems*
}
\author{Siddharth Mayya$^{1}$ and Magnus Egerstedt$^{2}$
\thanks{*This research was sponsored by Grant No. 1544332 from the U.S. National Science Foundation}
\thanks{$^{1}$Siddharth Mayya is with the School of Electrical and Computer Engineering, Georgia Institute of Technology, Atlanta, GA 30332, USA
        {\tt\small siddharth.mayya@gatech.edu}}%
\thanks{$^{2}$Magnus Egerstedt is a professor at the School of Electrical and Computer Engineering, Georgia Institute of Technology,
        Atlanta, GA 30332, USA
        {\tt\small magnus@gatech.edu}}%
}
\begin{document}
\maketitle
\thispagestyle{empty}
\pagestyle{empty}
\begin{abstract}
In multi-robot systems where a central decision maker is specifying the movement of each individual robot, a communication failure can severely impair the performance of the system. This paper develops a motion strategy that allows robots to safely handle critical communication failures for such multi-robot architectures. For each robot, the proposed algorithm computes a time horizon over which collisions with other robots are guaranteed not to occur. These safe time horizons are included in the commands being transmitted to the individual robots. In the event of a communication failure, the robots execute the last received velocity commands for the corresponding safe time horizons leading to a provably safe open-loop motion strategy. The resulting algorithm is computationally effective and is agnostic to the task that the robots are performing. The efficacy of the strategy is verified in simulation as well as on a team of differential-drive mobile robots.
\end{abstract}

\section{Introduction} \label{sec_intro}
Multi-robot systems have reached a point of maturity where they are beginning to be deployed in real-world scenarios (see e.g., the special issue \cite{kitts2008design} and constituent papers \cite{antonelli2008null,sariel2008naval}). Such deployments of robot teams often require a signal exchange network, typically in the form of a wireless communication channel. Communication is not only essential for sharing sensor measurements and performing diagnostics, it is often an integral part of the closed loop control mechanism \cite{balch1994communication,klavins2004communication}. In fact, in many applications and scenarios such as extra-terrestrial exploration \cite{yliniemi2014multirobot}, high precision manufacturing \cite{dogar2015multi}, and multi-robot testbeds \cite{pickem2016safe}, the robots frequently rely on communicating with a centralized decision maker for their velocity or position commands. In such situations, a failure in the communication network can severely hinder the motion of the robots and performance of the algorithm. This is the premise behind the work performed in this paper, whereby the adverse effects caused by intermittently failing communication networks are mitigated. \par
As deployment conditions for multi-robot systems become increasingly harsh, occasional failures in wireless communication channels are both expected and inevitable \cite{ramanathan2002brief}.
This raises the following question: {\it{What should a robot do in case a communication failure prevents it from receiving critical motion commands from a central decision maker?}} \par
Many different techniques have been explored to handle communication uncertainties in multi-robot teams \cite{arkin2002line,nguyen2003autonomous,ulam2004good}. In many cases, the robots are assumed to have significant decision making capability, have sensors to maneuver around obstacles, or have knowledge about the positions of other robots. Furthermore, some developed communication recovery techniques do not provide formal collision-avoidance guarantees in case of unforeseen communication failures. \par 
An existing technique used to handle communication failures, mentioned in \cite{ulam2004good}, is to stop the robot when critical data is not received. While this behavior preserves safety, it could cause the robot to behave erratically. For example, if only intermittent velocity commands are received, the robot could move in a jerky ``start-stop" fashion. This problem was observed on the Robotarium: a remote-access multi-robot testbed being developed at Georgia Tech \cite{pickem2016safe}. The Robotarium is a multi-robot research platform, and gives users the flexibility to test any coordination algorithm they wish. During experiments, it was observed that failures in the communication channels prevented robots from receiving velocity or position commands which caused them to abruptly stop moving. This lead to a disruption in the coordination algorithm being executed, and affected the ability of the Robotarium to faithfully reproduce the behavior specified by the user. \par
Motivated by the need to alleviate such problems in general, and resolve issues with the Robotarium in particular, this paper proposes a strategy that allows differential-drive robots without sensory or decision-making capabilities, to continue moving safely for a specific amount of time even when velocity commands from a central decision maker are not received. For each robot, the central decision maker computes a time horizon over which collisions with other robots are guaranteed not to occur. This is called the safe time horizon. During normal operations, the desired velocity and the safe time horizon are transmitted to the robots periodically. If a robot stops receiving data due to a communication failure, it executes the last received velocity command for the duration of the last received safe time horizon. This allows the robot to continue moving in a provably collision-free manner despite having no updated information about the environment. The robot can follow this open-loop trajectory for the entire duration of the safe time horizon, beyond which it stops moving.  \par
In order to calculate the safe time horizon, we first compute the set of all possible locations that can be reached by a robot within a given time (i.e., the reachable set \cite{fedotov2011reachable,soueres1994set}). This is followed by computing the time horizon for which each robot lies outside the reachable set of other robots. But, for the differential-drive robots considered here, performing such set-membership tests is computationally expensive owing to the non-convexity of the reachable set. Consequently, the reachable set is over-approximated by enclosing it within an ellipse whose convex structure allows for simpler set-membership tests and finite representation \cite{maler2008computing}. By minimizing the area of the ellipse enclosing the convex hull of the reachable set, we obtain the best ellipsoidal over-approximation of the reachable set in terms of the accuracy and effectiveness of set-membership tests. \par 
The outline of this paper is as follows: In Section \ref{sec_reachability_analysis}, the structure of the reachable sets of differential-drive robots is outlined. Section \ref{sec_approx_set} derives an ellipsoidal approximation of the reachable set. Section \ref{sec_safe_time_horizon} formally defines the safe time horizon, outlines the algorithm used by the robots, and proves the safety guarantees that it provides. In Section \ref{sec_results}, the developed algorithm is first implemented in simulation, following which, experimental verification is performed on a team of differential-drive mobile robots. Finally, Section \ref{sec_conclusion} concludes the paper.   
\section{Reachability Analysis} \label{sec_reachability_analysis}
Consider a dynamical system whose state evolves according to the following differential equation,
\begin{align} \label{general_non_linear_dynamics}
\dot {\mathbf{x}} = f(\mathbf{x},\mathbf{u}) , \;\; \mathbf{y} = h(\mathbf{x}), \;\; \mathbf{x}(0) = \mathbf{x_0},\;\; \mathbf{u} \in \mathcal{U}, \nonumber
\end{align}
where $\mathbf{x}$ is the state of the system, $\mathbf{y}$ represents the output of the system, $\mathbf{u}$ is the control input, and $\mathcal{U}$ is the set of admissible control inputs. The reachable set of outputs at time $t$ can be defined as 
\begin{equation*}
\mathcal{R}(t;\mathbf{x_0}) = \bigcup\limits_{\mathbf{u}(\cdot) \in \mathcal{U}} \mathcal{Y}(t,\mathbf{x_0},\mathbf{u}(\cdot)),
\end{equation*}
where $\mathcal{Y}(t,\mathbf{x_0},\mathbf{u}(\cdot))$ represents the output of the dynamical system at time $t$ under control action $\mathbf{u}(\cdot)$. \par
Since the Robotarium \cite{pickem2016safe}, in its current form, is populated with differential-drive mobile robots, this paper investigates the reachability of two-wheeled differential-drive robots with non-holonomic dynamics. \par 
For such systems, let $z = (x,y) \in \mathbb{R}^2$ denote the position of the robot in the 2D plane, and let $\phi \in [-\pi,\pi]$ denote its orientation with respect to the horizontal axis. As such, the robot is described as a point $(z,\phi)$ in the configuration space $\mathbb{R}^2 \times S^1$. The motion of the robots can be captured using the unicycle dynamics model:
\begin{align} \label{unicycle_dynamics_normalized}
& \dot x = v\cos(\phi), \; \dot y = v\sin(\phi), \; \dot \phi = \omega, \\
& |v| \leq 1, \; |\omega|\leq 1. \nonumber
\end{align}
The bounds on the linear velocity $v$ and the angular velocity $\omega$ represent the physical limitations of the robots, and are normalized to $1$ without loss of generality. \par 
Since collisions are ultimately defined by positions rather than orientations, we consider the output of the system to be $z$. Relative to this output, the reachable set $\mathcal{R}(t;z_0,\phi_0)$ is the set of all positions in the 2D plane that can be reached at time $t$ by a robot starting in the configuration $(z_0,\phi_0)$ at $t=0$. For $z_0 = (0,0), \phi_0 = 0$, denote the reachable set as $\mathcal{R}(t)$. Since the dynamics in \eqref{unicycle_dynamics_normalized} is drift-free, the structure of the reachable set does not depend on $z_0$ and $\phi_0$, i.e.,
\begin{equation}
\mathcal{R}(t;z_0,\phi_0) = z_0 + \Pi_{\phi_0}\mathcal{R}(t),
\end{equation}
where $\Pi_{\phi_0}$ is a rotation by $\phi_0$. Therefore, the study of reachable sets can be restricted to the case when $z_0 = (0,0)$ and $\phi_0 = 0$. \par
In order to reach the boundary of the reachable set, a robot must travel in a time-optimal manner \cite{boissonnat1994accessibility}. If this was not true, a point even farther away would be reachable in the same amount of time. A closely related motion model for which the structure of time-optimal paths has been extensively studied is the {\it{Reeds-Shepp car}} \cite{reeds1990optimal}. \par
The motion model of a {\it{Reeds-Shepp car}} is similar to \eqref{unicycle_dynamics_normalized}, except that the set of admissible inputs is $|v| = 1, |\omega| \leq 1$. In \cite{sussmann1991shortest}, the authors compute a family of trajectories rich enough to contain a time-optimal path between any two configurations for a robot with dynamics given in \eqref{unicycle_dynamics_normalized}. It is shown that every trajectory in this family is also time-optimal for a {\it{Reeds-Shepp car}}, and furthermore, this family of trajectories is sufficient for time-optimality of a {\it{Reeds-Shepp car}}. Another way to view this result is that, when moving in a time-optimal manner, the robot behaves like a {\it{Reeds-Shepp car}}. As mentioned earlier, any path reaching the boundary of the reachable set has to be time-optimal. Therefore, the reachable set for the {\it{Reeds-Shepp car}} and for the differential-drive robot considered here are identical. By utilizing expressions for the reachable set of a {\it{Reeds-Shepp car}} given in \cite{soueres1994set}, Fig. \ref{fig_reachSet} portrays the reachable set for a robot with dynamics given in \eqref{unicycle_dynamics_normalized}. \par
\begin{figure}[h]
	\centering
	\includegraphics[width=0.3\textwidth]{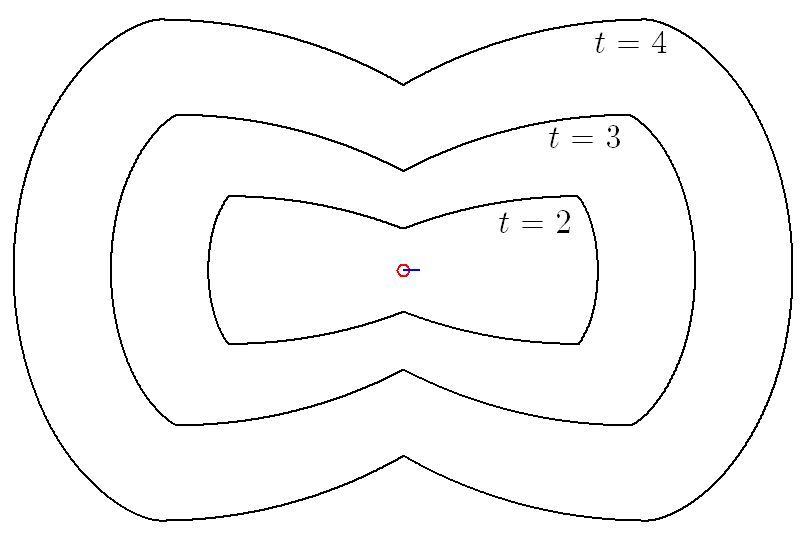}
	\caption{The reachable set $\mathcal{R}(t)$ of a robot with dynamics given in \eqref{unicycle_dynamics_normalized}, is depicted for varying time horizons. The robot is represented as a circle at the center.}
	\label{fig_reachSet}
\end{figure}
The complexity of performing set-membership tests with respect to the non-convex set $\mathcal{R}(t)$ increases the computational burden associated with the safe time horizon algorithm. Consequently, the next section derives an ellipsoidal approximation of the reachable set. 
\section{Approximating the Reachable Set} \label{sec_approx_set}
Given any convex set $K \subset \mathbb{R}^n$, there exists a unique ellipsoid of minimum volume circumscribing it \cite{john2014extremum}. This ellipsoid is denoted as $\xi(K)$. One way to derive an ellipsoidal approximation of the reachable set would be to compute the minimum area ellipse $\xi(conv(\mathcal{R}(t)))$, where $conv(\mathcal{R}(t))$ denotes the convex hull of $\mathcal{R}(t)$. Additionally, the derivation of analytical expressions for the ellipse $\xi(conv(\mathcal{R}(t)))$, if at all possible, will enable its efficient and fast computation in the safe time horizon algorithm. \par 

The feasibility of the derivation of analytical expressions for the ellipses, is closely linked to the symmetry properties of the underlying set as well as the equations describing it \cite{guler2012symmetry}. Consequently, in order to allow for analytical solutions, we introduce a new set $\mathcal{K}(t)$ enclosing $conv(\mathcal{R}(t))$, which allows for an easier computation of $\xi(\mathcal{K}(t))$. Essentially, this enables us to swap the problem of computing $\xi(conv(\mathcal{R}(t)))$ with the simpler problem of computing $\xi(\mathcal{K}(t))$. \par 
Furthermore, this approximation is justified by showing that the dissimilarity between $conv(\mathcal{R}(t))$ and $\mathcal{K}(t)$, as measured by the Jaccard distance metric \cite{jaccard1901distribution}, asymptotically goes to zero over time. For the sets $X, Y \subset \mathbb{R}^2$, the Jaccard distance based on the area measure is given by, 
\begin{equation}
d_J(X,Y) = 1 - \frac{\mathbf{A}(X \cap Y)}{\mathbf{A}(X \cup Y)},
\end{equation}
where $\mathbf{A}(\cdot)$ denotes the area of the set. The following proposition formally introduces the set $\mathcal{K}(t)$.
\begin{prop} \label{prop_simple_R}
	Let $\mathcal{K}(t)$ be given by,
	\begin{align}
	\mathcal{K}(t) = & \Bigg\{ p = \begin{bmatrix} p_x \\ p_y \end{bmatrix} \in \mathbb{R}^2 : \|p\|_2 \leq t, \vphantom{\begin{cases} 1-\cos(t) \\ t-\pi/2+1 \end{cases}} \\
	&
	\;\;\;|p_{y}| \leq \left \{
	\begin{array}{@{}ll@{}}
	1-\cos(t), \;\;\;\; \text{if} \;\;\; 0 < t \leq \pi/2 \\ t-\pi/2+1,\; \text{if} \;\;\; t>\pi/2
	\end{array} \right.
	\vphantom{\begin{cases} 1-\cos(t) \\ t-\pi/2+1 \end{cases}}
	\Bigg\}.
	\end{align}
	Then, $conv(\mathcal{R}(t)) \subset \mathcal{K}(t)$ and 
	\begin{equation}
	\lim_{t \to \infty} d_J(conv(\mathcal{R}(t)),\mathcal{K}(t)) = 0,
	\end{equation}
	where $d_J$ is the Jaccard distance.
\end{prop}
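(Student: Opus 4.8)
The plan is to establish the two assertions in turn: the set inclusion $conv(\mathcal{R}(t)) \subset \mathcal{K}(t)$ by checking the two defining constraints on reachable endpoints, and the Jaccard limit by showing that $conv(\mathcal{R}(t))$ contains an almost-full disk of radius $t$, via a uniform support-function estimate.

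For the inclusion, note that $\mathcal{K}(t)$ is convex (an intersection of a disk and a horizontal strip), so it suffices to prove $\mathcal{R}(t)\subset\mathcal{K}(t)$, i.e.\ to verify the radial and lateral constraints for an arbitrary endpoint of \eqref{unicycle_dynamics_normalized}. Write a trajectory as $(x(s),y(s),\phi(s))$ with $\phi(0)=0$, $|v|\le1$, $|\omega|\le1$; then $\phi(s)=\int_0^s\omega(\tau)\,d\tau$ gives $|\phi(s)|\le s$. For the radial bound, $\|(x(t),y(t))\| = \big\|\int_0^t v(s)(\cos\phi(s),\sin\phi(s))\,ds\big\| \le \int_0^t|v(s)|\,ds \le t$. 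For the lateral bound, $|y(t)|\le\int_0^t|\sin\phi(s)|\,ds$; on $[0,\pi/2]$ the inequality $|\phi(s)|\le s\le\pi/2$ together with monotonicity of $\sin$ on $[0,\pi/2]$ gives $|\sin\phi(s)|\le\sin s$, while $|\sin\phi(s)|\le1$ always, so integrating yields $|y(t)|\le 1-\cos t$ for $0<t\le\pi/2$ and $|y(t)|\le(1-\cos\tfrac{\pi}{2})+(t-\tfrac{\pi}{2})=t-\tfrac{\pi}{2}+1$ for $t>\pi/2$, which is precisely the piecewise bound in $\mathcal{K}(t)$. Hence $\mathcal{R}(t)\subset\mathcal{K}(t)$ and, by convexity of $\mathcal{K}(t)$, $conv(\mathcal{R}(t))\subset\mathcal{K}(t)$.

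For the Jaccard limit, fix $t\ge\pi$ and, for each $\theta\in[0,\pi]$, consider the reachable point $Q_\theta$ obtained by applying $v=1,\ \omega=1$ for time $\theta$ (rotating the heading from $0$ to $\theta$) and then $v=1,\ \omega=0$ for the remaining time $t-\theta$; explicitly $Q_\theta=\big(\sin\theta+(t-\theta)\cos\theta,\ 1-\cos\theta+(t-\theta)\sin\theta\big)\in\mathcal{R}(t)$. The key observation is that a one-line computation gives $\langle Q_\theta,u_\theta\rangle=(t-\theta)+\sin\theta\ge t-\pi$ for $u_\theta=(\cos\theta,\sin\theta)$, the transverse component of $Q_\theta$ contributing nothing to this inner product. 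Since $\mathcal{R}(t)$ is symmetric about the $x$-axis (reflecting a trajectory corresponds to $\omega\mapsto-\omega$), the mirror points cover every direction in the lower half-plane as well, so the support function of $conv(\mathcal{R}(t))$ satisfies $h(u)\ge t-\pi$ for every unit vector $u$. For a closed convex set this is equivalent to $\{\,p:\|p\|\le t-\pi\,\}\subset conv(\mathcal{R}(t))$, so $\mathbf{A}(conv(\mathcal{R}(t)))\ge\pi(t-\pi)^2$. Combining with the inclusion, $conv(\mathcal{R}(t))\subset\mathcal{K}(t)\subset\{\,p:\|p\|\le t\,\}$ gives $\pi(t-\pi)^2\le\mathbf{A}(conv(\mathcal{R}(t)))\le\mathbf{A}(\mathcal{K}(t))\le\pi t^2$, and because $conv(\mathcal{R}(t))\subset\mathcal{K}(t)$ the Jaccard distance reduces to $d_J=1-\mathbf{A}(conv(\mathcal{R}(t)))/\mathbf{A}(\mathcal{K}(t))$, which is therefore squeezed between $0$ and $1-(1-\pi/t)^2\to0$.

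I expect the only real obstacle to be the choice of the family $\{Q_\theta\}$ in the second part: once one uses these turn-then-go-straight trajectories and notices that projecting the endpoint onto $u_\theta$ annihilates the transverse term $(1-\cos\theta)(\sin\theta,-\cos\theta)$, the uniform support-function bound, hence the enclosed disk and the area estimate, follow immediately, and no explicit description of the non-convex boundary of $\mathcal{R}(t)$ is needed. A minor point to keep track of is that these trajectories require $t\ge\pi$ so that an arc of angle up to $\pi$ fits in time $t$; this is harmless since only the behaviour as $t\to\infty$ is at stake.
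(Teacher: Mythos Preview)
Your proof is correct and takes a genuinely different route from the paper's in both parts.

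For the inclusion, the paper invokes the explicit parametrization of the curved boundary $\mathcal{C}^{++}(t)$ of $conv(\mathcal{R}(t))$ from the Reeds--Shepp literature, then shows by differentiation that this curve stays inside the circle of radius $t$ and reads off the maximal $y$-coordinate. Your argument bypasses the boundary description entirely: you bound $\|z(t)\|$ and $|y(t)|$ directly by integral estimates on an arbitrary admissible trajectory, using only $|\phi(s)|\le s$ and monotonicity of $\sin$ on $[0,\pi/2]$. This is more elementary and self-contained, at the cost of not localizing where the inclusion is tight.

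For the Jaccard limit, the paper computes closed-form expressions for $\mathbf{A}(conv(\mathcal{R}(t)))$ (by integrating around the parametrized boundary) and for $\mathbf{A}(\mathcal{K}(t))$, and shows both are asymptotically $\tfrac{\pi}{4}t^2$ per quadrant. You instead exhibit the turn-then-straight endpoints $Q_\theta$, observe via the orthogonal decomposition that $\langle Q_\theta,u_\theta\rangle=(t-\theta)+\sin\theta\ge t-\pi$, and combine with the $x$-axis symmetry to get a uniform lower bound on the support function of $conv(\mathcal{R}(t))$, hence the inclusion of the disk of radius $t-\pi$. This gives the cruder but sufficient area sandwich $\pi(t-\pi)^2\le\mathbf{A}(conv(\mathcal{R}(t)))\le\mathbf{A}(\mathcal{K}(t))\le\pi t^2$, from which $d_J\le 1-(1-\pi/t)^2\to0$. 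The paper's approach yields exact area formulas (useful elsewhere in the paper), while yours is shorter, avoids any reliance on the cited boundary parametrization, and makes transparent \emph{why} the convex hull is nearly the full disk.
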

\begin{proof}
	Denote $\mathcal{C}^{++}(t)$ as the curved outer boundary of $conv(\mathcal{R}(t))$ in the first quadrant (see Fig. \ref{fig_approxRt}). From \cite{soueres1994set}, we know that, $\mathcal{C}^{++}(t)$ can be expressed as,
	\begin{align} 
	& \mathcal{C}^{++}(t) = \Bigg\{\ p \in \mathbb{R}^2_{++}: \left\{
	\begin{array}{@{}ll@{}}
	p_x = \sin \psi + \gamma\cos \psi \\
	p_y = -\cos \psi + \gamma\sin \psi + 1
	\end{array} \right. \Bigg\}, \\
	& \text{where} \;\; \psi \in [0,\min(t,\pi/2)], \text{ and } \gamma = t-\psi.
	\end{align}
	In order to prove that $conv(\mathcal{R}(t)) \subset \mathcal{K}(t)$, we first show that all points on $\mathcal{C}^{++}(t)$ are closer to the origin than points on the outer boundary of $\mathcal{K}(t)$, which is a circular arc of radius $t$.
	Let $p(t,\psi) = (p_x(t,\psi),p_y(t,\psi)) $ denote a point on the curve $\mathcal{C}^{++}(t)$. The squared $l_2$-norm of $p(t,\psi)$ is given as,
	\begin{align*} 
	\|p(t,\psi)\|_2^2 &= (t-\psi)^2 + 2 - 2\cos(\psi) + 2(t-\psi)\sin(\psi),
	\end{align*}
	where $\psi \in [0,\min(t,\pi/2)]$. 
	The derivative of $\|p(t,\psi)\|_2^2$ with respect to the parameter $\psi$, is shown to be always non-positive:
	\begin{align} \label{circ_approx_proof1}
	\frac {\partial{\|p(t,\psi)\|_2^2}}{\partial{\psi}} &= 2(t-\psi)(\cos(\psi)-1) \leq 0 , \\ \forall  \; \psi &\in [0,\min(t,\pi/2)].
	\end{align}
	Since $p(t,0) = (t,0)$, this point lies on the circular arc of radius $t$. Then, the fact that $\frac {\partial{\|p(t,\psi)\|_2^2}}{\partial{\psi}} \leq 0$, implies that, as $\psi$ increases, the curve either overlaps with the circular arc of radius $t$, or gets closer to the origin. This proves that, a part of the outer boundary of $conv(\mathcal{R}(t))$, represented by the curve $\mathcal{C}^{++}(t)$, can be over-approximated by a circular arc of radius $t$. (see Fig. \ref{fig_approxRt}).
	\begin{figure}[h]
		\centering
		\includegraphics[trim = {0.1cm 0cm 0cm 0.0cm}, clip, width=0.4\textwidth]{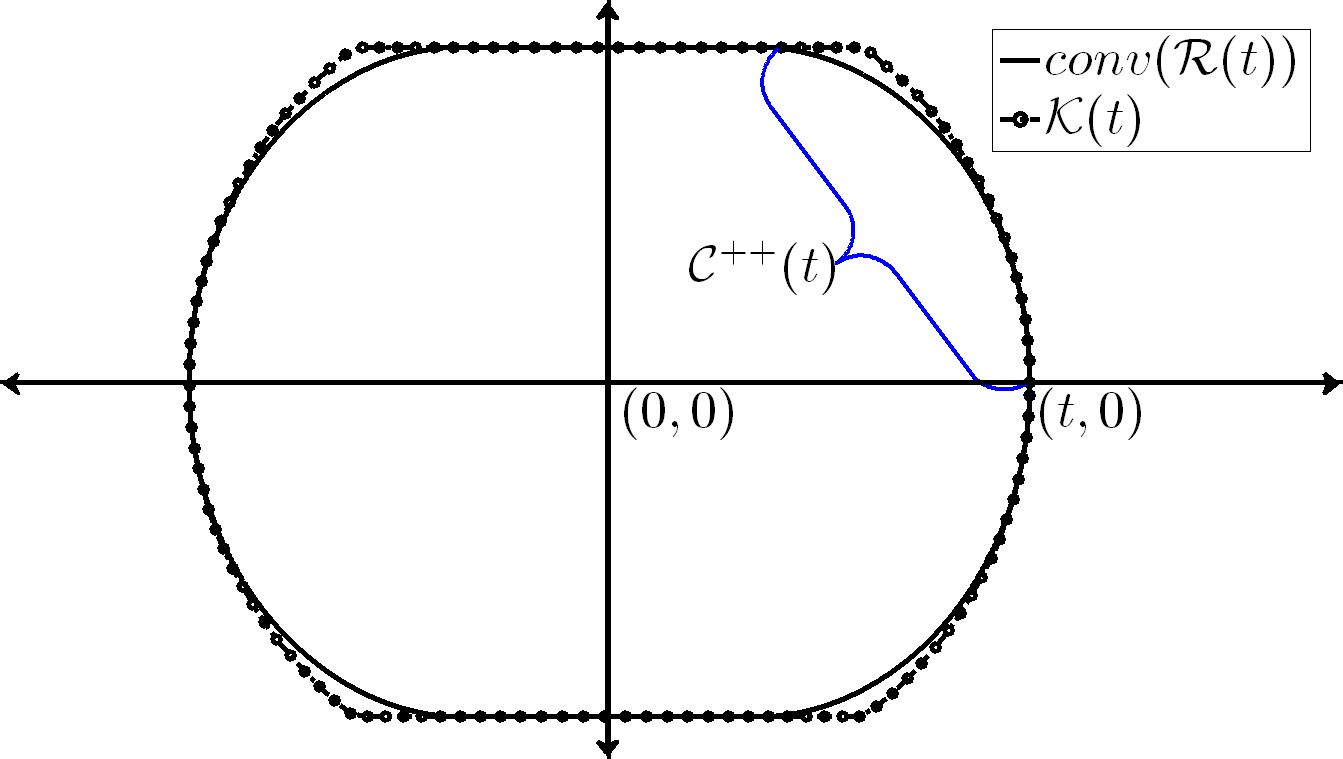}
		\caption{The set $conv(\mathcal{R}(t))$ is shown enclosed within $\mathcal{K}(t)$. In particular, it is illustrated how the curve $\mathcal{C}^{++}(t)$, which represents the curved boundary of $conv(\mathcal{R}(t))$ in the first quadrant, can be over-approximated by a circular arc of radius $t$, which forms the boundary of $\mathcal{K}(t)$ in the first quadrant.}
		\label{fig_approxRt}
	\end{figure}
	Also, the definition of $p_y(t,\psi)$ gives us the $y$-axis bounds of $conv(\mathcal{R}(t))$:
	\begin{align} \label{circ_approx_proof2}
	\max_{\psi \in [0,\min(t,\pi/2)]} p_y(t,\psi) = \begin{cases} 1-\cos(t), \;\; \text{if }  \; 0 < t \leq \pi/2  \\ t-\pi/2+1, \text{if } \; t > \pi/2 \end{cases}.
	\end{align}
	By symmetry, these results hold in all the 4 quadrants and it follows that $conv(\mathcal{R}(t)) \subset \mathcal{K}(t)$ (see Fig. \ref{fig_approxRt}). \par 
	Next, it is shown that, as $t$ grows larger, the dissimilarity between the two sets asymptotically goes to zero. Due to the asymptotic nature of the result, only values of time greater than $\pi/2$ are considered. Since $conv(\mathcal{R}(t)) \subset \mathcal{K}(t)$,  $\;conv(\mathcal{R}(t)) \cap \mathcal{K}(t) = conv(\mathcal{R}(t)) $ and  $\;conv(\mathcal{R}(t)) \cup \mathcal{K}(t) = \mathcal{K}(t)$.
	So, the Jaccard distance can be computed as,
	\begin{align} \label{eqn_jaccard_dist}
	d_J\big (conv(\mathcal{R}(t)),\mathcal{K}(t)\big ) = 1 - \frac{\mathbf{A}(conv(\mathcal{R}(t)))}{\mathbf{A}(\mathcal{K}(t))}.
	\end{align}
	Since both the sets are symmetric with respect to the $x$ and $y$ axes, area is computed in the first quadrant alone. After integrating over the boundary of $conv(\mathcal{R}(t))$, $\mathbf{A}(conv(\mathcal{R}(t)))$ is given by,
	\begin{equation} \label{eqn_area_final_convH}
	\mathbf{A}(conv(\mathcal{R}(t))) = \frac{1}{48}(12\pi(t^2-1) + t(48 - 6\pi^2) + \pi^3).
	\end{equation}
	Similarly, computing the area of $\mathcal{K}(t)$ in the first quadrant,
	\begin{multline} \label{delta_eqn_1}
	\mathbf{A}(\mathcal{K}(t)) = \frac{t^2}{2}\bigg (\sin^{-1}{(1-\delta(t))} + \\ 
	\frac{1}{2}\sin(2\sin^{-1}{(1-\delta(t))})\bigg ),
	\end{multline} 
	where $\delta(t) = (\pi/2 -1)/t$.
	As $t \rightarrow \infty$, the term $t^2$ in \eqref{eqn_area_final_convH} dominates and $\delta(t) \rightarrow 0$ in \eqref{delta_eqn_1}. So, for large values of $t$,
	\begin{equation} \label{eqn_kt_approx_area}
	\mathbf{A}(conv(\mathcal{R}(t))) \approx \frac{\pi}{4}t^2,\;\; \mathbf{A}(\mathcal{K}(t)) \approx \frac{\pi}{4}t^2.\\
	\end{equation}
	Evaluating $\lim_{t \to \infty} d_J(conv(\mathcal{R}(t)),\mathcal{K}(t))$ using \eqref{eqn_kt_approx_area}, the desired result is obtained.
\end{proof}
The set $\mathcal{K}(t)$ not only allows us to derive analytical expressions for the minimum area ellipse enclosing it, the asymptotic reduction in the dissimilarity between $\mathcal{K}(t)$ and $conv(\mathcal{R}(t))$ implies that, the impact of using $\xi(\mathcal{K}(t))$ instead of $\xi(conv(\mathcal{R}(t)))$ on the accuracy of set-membership tests in the safe time horizon algorithm, is minimal. \par
In $\mathbb{R}^n$, an ellipsoid can be represented uniquely by its center $c$ and a positive-definite matrix $H$: $E(c,H) = \{x \in \mathbb{R}^n: (x-c)^TH(x-c) \leq 1 \}$. According to results presented in \cite{john2014extremum}, at any given time $t$, the minimum area ellipse circumscribing $\mathcal{K}(t)$ can be obtained by solving the following semi-infinite programming problem:
\begin{align} \label{semi_inf_ellipse_1}
& \min_{c,H} - \log \: det(H) \\
& s.t. \;\; (z-c)^TH(z-c) \leq 1 , \forall z \in \mathcal{K}(t). \nonumber
\end{align} \par
The rest of this section formulates an analytical solution to this semi-infinite programming problem. In Lemma \ref{lemma_struct_ellipse}, we utilize the symmetry properties of $\mathcal{K}(t)$ to determine the center and orientation of the ellipse $\xi(\mathcal{K}(t))$. Following this, Lemma \ref{lemma_contact_points} and Lemma \ref{lemma_convex_program} pose the semi-infinite programming problem as a convex optimization problem. By solving this, we present analytical expressions for the ellipse $\xi(\mathcal{K}(t))$ in Theorem \ref{theorem_ellipse_exp}.
\begin{lem} \label{lemma_struct_ellipse}
	The ellipse $\xi(\mathcal{K}(t))$ has the form $E(c,H(t))$, where $c = (0,0)$ and $H(t) = diag(A(t),B(t))$ for some $A(t), B(t) \in \mathbb{R}$, such that $A(t) > 0 , B(t) > 0, \forall t > 0$.
\end{lem}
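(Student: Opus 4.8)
The plan is to derive the stated structure purely from the reflection symmetries of $\mathcal{K}(t)$ together with the uniqueness of the minimum-area circumscribing ellipse \cite{john2014extremum}. First I would observe that every constraint defining $\mathcal{K}(t)$ in Proposition~\ref{prop_simple_R} depends on $p = (p_x,p_y)$ only through $|p_x|$, $|p_y|$ and $t$: the disk constraint $\|p\|_2 \le t$ is invariant under a sign change of either coordinate, and the slab constraint on $|p_y|$ involves neither $p_x$ nor the sign of $p_y$. Hence $\mathcal{K}(t)$ is invariant under the two reflections $R_x = \mathrm{diag}(1,-1)$ and $R_y = \mathrm{diag}(-1,1)$, and therefore under the finite group they generate.

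Next I would use the uniqueness of the minimizer in \eqref{semi_inf_ellipse_1}. Let $E = E(c,H)$ be the minimum-area ellipse containing $\mathcal{K}(t)$, and let $T$ be any linear isometry with $T(\mathcal{K}(t)) = \mathcal{K}(t)$. Then $T(E)$ is again an ellipse, has the same area as $E$ since $T$ preserves area, and contains $T(\mathcal{K}(t)) = \mathcal{K}(t)$; by uniqueness $T(E) = E$. A direct computation shows $R_x\big(E(c,H)\big) = E(R_x c,\, R_x H R_x)$, so taking $T = R_x$ gives $R_x c = c$ and $R_x H R_x = H$. The first equality forces the second coordinate of $c$ to vanish, and the second, writing the symmetric matrix $H$ with off-diagonal entry $h_{12}$, forces $h_{12} = 0$. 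Repeating the argument with $T = R_y$ forces the first coordinate of $c$ to vanish as well. Hence $c = (0,0)$ and $H(t) = \mathrm{diag}(A(t),B(t))$ for some real $A(t), B(t)$.

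Finally, since the slab half-width is strictly positive for every $t > 0$ (it equals $1-\cos t > 0$ for $0 < t \le \pi/2$ and $t-\pi/2+1 > 1$ for $t > \pi/2$), the set $\mathcal{K}(t)$ contains a ball about the origin and is bounded, so its minimum-area circumscribing ellipse is non-degenerate; equivalently $H(t) \succ 0$, which gives $A(t) > 0$ and $B(t) > 0$ for all $t > 0$. I expect no genuine analytic obstacle here; the argument is a standard symmetry-plus-uniqueness reduction. The only steps that warrant care are verifying that a symmetry of $\mathcal{K}(t)$ sends circumscribing ellipses to circumscribing ellipses of equal area (so that uniqueness may be invoked) and checking that the conjugation identity $R_x H R_x = H$ genuinely forces the off-diagonal entry of $H$ to vanish rather than merely restricting it.
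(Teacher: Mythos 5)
Your proposal is correct and follows essentially the same route as the paper: exploit the reflection symmetries of $\mathcal{K}(t)$ together with the uniqueness of the minimum-area circumscribing ellipse to force $c=(0,0)$ and a diagonal $H(t)$, with positive definiteness giving $A(t),B(t)>0$. The only difference is cosmetic: the paper invokes the automorphism-group result $\mathcal{O}(\mathbf{K})\subseteq\mathcal{O}(\xi(\mathbf{K}))$ from the cited reference, whereas you re-derive that invariance-inheritance step directly from uniqueness, which makes the argument self-contained but is not a different proof idea.
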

\begin{proof} 
	For a convex set $\mathbf{K} \subset \mathbb{R}^n$, denote $\mathcal{O}(\mathbf{K})$ as a set of affine transformations which leave the set $\mathbf{K}$ invariant:
	\begin{equation}
	\mathcal{O}(\mathbf{K}) = \{\mathbf{T}(x) = a + Px : \mathbf{T}(\mathbf{K}) = \mathbf{K}\}.
	\end{equation}
	This set is called the automorphism group of $\mathbf{K}$. Applying results from  \cite{guler2012symmetry}, we know that, $\mathcal{O}(\mathbf{K}) \subseteq \mathcal{O}(\xi(\mathbf{K}))$.
	Since $\mathcal{K}(t)$ is symmetric about both the $x$ and $y$ axes, the transformation $ \mathbf{T}(x) = -I_2x$, where $I_2$ is the identity matrix, lies in $\mathcal{O}(\mathcal{K}(t))$, and hence in $\mathcal{O}(\xi(\mathcal{K}(t)))$. Furthermore, if $\mathbf{T} \in \mathcal{O}(\xi(\mathcal{K}(t)))$ then $ \mathbf{T}(c) = c$. Applying this result with the transformation $\mathbf{T}(x) =  -I_2x$, we get $c = (0,0)$. \\
	We know from \cite{guler2012symmetry}, that $\mathbf{T} \in \mathcal{O}(\xi(\mathcal{K}(t))) \implies P^TH(t)P = H(t)$. The structure of $H(t)$ appears by applying
	\begin{equation}
	P = \begin{bmatrix} 1 & 0 \\ 0 & -1 \end{bmatrix},
	\end{equation}
	and the fact that $H(t)$ is always positive definite. 
\end{proof}
Applying results from \cite{john2014extremum} for the case of $\mathcal{K}(t) \subset \mathbb{R}^2$, we know that, there exist contact points $\{q_i\}_1^h$,  $ 0 < h \leq 5$  satisfying  $q_i \in \partial{\mathcal{K}(t)} \cap \partial{\xi(\mathcal{K}(t))}, i = 1,\ldots,h$. Furthermore, these contact points cannot all lie in any closed halfspace whose bounding hyperplane passes through the center of $\xi(\mathcal{K}(t))$. Using these facts, the following lemma can be stated:
\begin{lem} \label{lemma_contact_points}
	There are 4 contact points between $\mathcal{K}(t)$ and $\xi(\mathcal{K}(t))$.
\end{lem}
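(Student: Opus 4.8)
The plan is to combine the structural result of Lemma~\ref{lemma_struct_ellipse} --- that $\xi(\mathcal{K}(t)) = E(c,H(t))$ is an origin-centered, axis-aligned ellipse --- with a short symmetry (orbit-counting) argument, and then feed in the two facts about John's ellipse recalled just above the statement: the number of contact points $h$ satisfies $0 < h \le 5$, and these points cannot all lie in a closed halfspace whose bounding hyperplane passes through the center of $\xi(\mathcal{K}(t))$. (I would first note in passing that $b(t) := $ the $y$-bound in Proposition~\ref{prop_simple_R} satisfies $b(t) < t$ for all $t>0$, since $1-\cos t < t$ on $(0,\pi/2]$ and $1 < \pi/2$ on $(\pi/2,\infty)$; hence $\mathcal{K}(t)$ is a proper convex subset of the disk of radius $t$, so $\xi(\mathcal{K}(t))$ is a genuine ellipse and the finiteness bound $h \le 5$ is meaningful.)

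Set $\mathcal{Q}(t) = \partial\mathcal{K}(t) \cap \partial\xi(\mathcal{K}(t))$. The first step is to observe that every $\mathbf{T} \in \mathcal{O}(\mathcal{K}(t))$ maps $\mathcal{Q}(t)$ onto itself: $\mathbf{T}$ is an affine bijection leaving $\mathcal{K}(t)$ invariant and, by the inclusion $\mathcal{O}(\mathcal{K}(t)) \subseteq \mathcal{O}(\xi(\mathcal{K}(t)))$ already used in Lemma~\ref{lemma_struct_ellipse}, it also leaves $\xi(\mathcal{K}(t))$ invariant, hence it preserves both boundaries and their intersection. Since $\mathcal{K}(t)$ is symmetric about both coordinate axes, the Klein four-group $G = \{ I_2,\, -I_2,\, \mathrm{diag}(1,-1),\, \mathrm{diag}(-1,1)\}$ is contained in $\mathcal{O}(\mathcal{K}(t))$, so $\mathcal{Q}(t)$ is a disjoint union of $G$-orbits. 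Because $t>0$, the origin lies in the interior of $\mathcal{K}(t)$ and is therefore not a contact point; inspecting the order-two subgroups of $G$ then shows that the $G$-orbit of a nonzero point $(x,y)$ has size $4$ when $x \ne 0$ and $y \ne 0$, and size $2$ when the point lies on a coordinate axis. Consequently $h = |\mathcal{Q}(t)|$ is even, and together with $h \le 5$ this already forces $h \in \{2,4\}$.

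It remains only to rule out $h = 2$. If $h = 2$, then $\mathcal{Q}(t)$ consists of a single $G$-orbit, which by the orbit classification above must have the form $\{(a,0),(-a,0)\}$ or $\{(0,b),(0,-b)\}$ for some $a>0$ or $b>0$. In either case both contact points lie on a line through the origin, hence on the bounding hyperplane of a closed halfspace through the center of $\xi(\mathcal{K}(t))$ --- contradicting the non-halfspace property of John-ellipse contact points quoted before the lemma. Therefore $h = 4$.

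I expect the only points requiring care to be bookkeeping rather than analysis: justifying that affine automorphisms of $\mathcal{K}(t)$ genuinely permute the contact set (which reduces to the already-cited $\mathcal{O}(\mathcal{K}(t)) \subseteq \mathcal{O}(\xi(\mathcal{K}(t)))$), and the elementary case check of $G$-orbit sizes that pins down the parity of $h$. No new estimate on $\mathcal{K}(t)$ or on the ellipse is needed beyond what Proposition~\ref{prop_simple_R} and Lemma~\ref{lemma_struct_ellipse} already provide.
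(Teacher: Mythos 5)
Your proposal is correct and follows essentially the same route as the paper: symmetry of $\mathcal{K}(t)$ and $\xi(\mathcal{K}(t))$ about both axes forces $h \in \{2,4\}$, and the halfspace condition on John-ellipse contact points rules out $h=2$ since two symmetric contact points would have to lie on a coordinate axis. Your orbit-counting formulation merely makes explicit the symmetry bookkeeping that the paper's proof states informally.
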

\begin{proof}
	We seek to find the number of contact points $h$. Since $\mathcal{K}(t)$ and $\xi(\mathcal{K}(t))$ are symmetric about the $x$ and $y$ axes (see Lemma \ref{lemma_struct_ellipse}), and $0 < h \leq 5$, $h$ can only take values 2 or 4. If $h=2$, both contact points must lie on the $x$ or $y$ axes (otherwise symmetry in all the 4 quadrants is not possible). But, the contact points cannot lie in any closed halfspace whose bounding hyperplane passes through the center of $\xi(\mathcal{K}(t))$. Hence $h=4$.
\end{proof}
Given the structure of $H(t)$ from Lemma \ref{lemma_struct_ellipse}, and the number of contact points from Lemma \ref{lemma_contact_points}, the semi-infinite programming problem \eqref{semi_inf_ellipse_1} will now be re-formulated as a convex optimization problem.
\begin{lem} \label{lemma_convex_program}
	The matrix $H(t) = diag(A(t),B(t))$ can be expressed as the solution of a convex optimization problem. At any given time $t > 0$, $A(t)$ is given as, 
	\begin{align} \label{semi_inf_ellipse_2}
	& A(t) = \argmin_{\mathbf{X}_t} \; - \log \; \mathbf{X}_t - \log \;\frac{ 1-\mathbf{X}_t(t^2-\alpha(t)^2)}{\alpha(t)^2}\\
	& s.t. \;\; 0 < \mathbf{X}_t \leq \frac{1}{t^2}. \nonumber
	\end{align}
	Furthermore,
	\begin{equation}
	B(t) = \frac{1 - A(t)(t^2-\alpha(t)^2)}{\alpha(t)^2}
	\end{equation}
	where $\alpha(t) = \begin{cases} 1-\cos(t), \;\; \text{ if } \; 0 < t \leq \pi/2 \\ t-\pi/2 + 1, \text{ if } \;\: t > \pi/2 \end{cases}$.
\end{lem}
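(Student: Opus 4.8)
The plan is to collapse the semi-infinite program \eqref{semi_inf_ellipse_1} into a one-dimensional convex program, exploiting the structure from Lemma~\ref{lemma_struct_ellipse} together with the simple geometry of $\mathcal{K}(t)$. Throughout I write $\alpha = \alpha(t)$ and $\beta = \sqrt{t^2-\alpha^2}$; since $0 < \alpha < t$ for every $t>0$, $\beta$ is well-defined and positive, and $\mathcal{K}(t)$ is precisely the intersection of the disk $\{\|z\|_2 \le t\}$ with the horizontal strip $\{|z_y| \le \alpha\}$.

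By Lemma~\ref{lemma_struct_ellipse}, every candidate ellipse has the form $\mathcal{E}(A,B) = \{z \in \mathbb{R}^2 : A z_x^2 + B z_y^2 \le 1\}$ with $A,B>0$, so $\det H(t) = AB$ and the objective in \eqref{semi_inf_ellipse_1} is $-\log A - \log B$. The first — and key — step is to replace the infinitely many constraints $z^{T} H(t) z \le 1$, $z \in \mathcal{K}(t)$, by the two scalar inequalities $A t^2 \le 1$ and $A\beta^2 + B\alpha^2 \le 1$. Since $\mathcal{E}(A,B)$ is closed and convex and $z \mapsto A z_x^2 + B z_y^2$ is convex, containment $\mathcal{K}(t) \subseteq \mathcal{E}(A,B)$ is equivalent to $\sup_{z \in \mathcal{K}(t)}(A z_x^2 + B z_y^2) \le 1$, and this supremum is attained on $\partial\mathcal{K}(t)$. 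On the horizontal segments $\{z_y = \pm\alpha,\ |z_x| \le \beta\}$ the form is bounded by $A\beta^2 + B\alpha^2$; on the circular arcs $\{\|z\|_2 = t,\ |z_y| \le \alpha\}$, substituting $z_x^2 = t^2 - z_y^2$ gives $A z_x^2 + B z_y^2 = A t^2 + (B-A) z_y^2$, which is affine in $z_y^2 \in [0,\alpha^2]$ and hence bounded by $\max\{A t^2,\ A\beta^2 + B\alpha^2\}$. Collecting the two pieces, $\sup_{z \in \mathcal{K}(t)}(A z_x^2 + B z_y^2) = \max\{A t^2,\ A\beta^2 + B\alpha^2\}$, which yields the claimed equivalence (and is consistent with Lemma~\ref{lemma_contact_points}: the four corners $(\pm\beta,\pm\alpha)$ are the contact points when $A\beta^2 + B\alpha^2 = 1$).

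With this reduction, \eqref{semi_inf_ellipse_1} is the finite convex program $\min_{A,B>0}\,-\log A - \log B$ subject to the linear constraints $A\beta^2 + B\alpha^2 \le 1$ and $A t^2 \le 1$, and $(A(t),B(t))$ is its unique minimizer. Next I would argue that the first constraint is active at the optimum: if it were slack, $B$ could be increased slightly without violating either constraint while strictly decreasing $-\log B$. Hence $B(t) = (1 - A(t)\beta^2)/\alpha^2$, which is the stated formula for $B(t)$; moreover the remaining constraint forces $A(t) \le 1/t^2 < 1/\beta^2$, so $1 - A(t)\beta^2 > 0$ and $B(t) > 0$ automatically. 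Substituting $B = (1 - \mathbf{X}_t\beta^2)/\alpha^2$ into the objective collapses the program to the one-dimensional problem \eqref{semi_inf_ellipse_2} in $\mathbf{X}_t = A$, whose objective is strictly convex on $(0,1/t^2]$ (its second derivative is $1/\mathbf{X}_t^2 + \beta^4/(1 - \mathbf{X}_t\beta^2)^2 > 0$) and blows up as $\mathbf{X}_t \to 0^{+}$, so the minimizer $A(t)$ exists and is unique.

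The hard part is the first step — collapsing the semi-infinite containment constraint into two scalar inequalities. It hinges on two observations: it suffices to test $\partial\mathcal{K}(t)$, and on the circular part of that boundary the quadratic form is \emph{affine} in $z_y^2$, so its maximum there is attained at an endpoint $z_y \in \{0,\pm\alpha\}$, the corresponding points $(\pm t,0)$ and $(\pm\beta,\pm\alpha)$ already lying on $\partial\mathcal{K}(t)$. Everything afterwards — writing down the finite program, forcing one constraint to be active by a monotonicity argument, eliminating $B$, and checking well-posedness of the scalar problem — is routine convex-analysis bookkeeping.
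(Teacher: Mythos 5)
Your proof is correct, and it reaches the same finite reformulation as the paper but justifies the key step differently. The paper translates the containment $\mathcal{K}(t)\subset E((0,0),H(t))$ into nonpositivity of the quadratic $g(y)=A(t^2-y^2)+By^2-1$ on $|y|\le\alpha(t)$ (i.e., it only samples the circular arc, leaving the horizontal segments implicit) and then invokes Lemma \ref{lemma_contact_points} -- the four John contact points, one per quadrant -- to force the roots of $g$ to sit at $y=\pm\alpha(t)$, which is exactly the statement that the constraint $A\beta^2+B\alpha^2=1$ (with $\beta^2=t^2-\alpha^2$) holds with equality; equating coefficients then yields $B\ge A$, $A\le 1/t^2$ and the elimination of $B$. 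You instead compute $\sup_{z\in\mathcal{K}(t)}(Az_x^2+Bz_y^2)=\max\{At^2,\,A\beta^2+B\alpha^2\}$ directly by splitting $\partial\mathcal{K}(t)$ into the segments and the arcs (where the form is affine in $z_y^2$), so containment becomes two explicit inequalities covering both boundary pieces, and you obtain the equality $A\beta^2+B\alpha^2=1$ at the optimum by an elementary perturbation argument (a slack constraint would let $B$ increase, strictly improving $-\log B$), never needing Lemma \ref{lemma_contact_points} except as a consistency check. The two derivations agree on the active manifold (there $B\ge A\iff At^2\le1$), so the resulting one-dimensional program \eqref{semi_inf_ellipse_2} and the formula for $B(t)$ coincide. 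What your route buys is self-containedness and a slightly tighter treatment of the semi-infinite constraint (both boundary pieces handled explicitly, activeness derived rather than imported from John's characterization), plus an explicit well-posedness check ($f''(\mathbf{X}_t)=1/\mathbf{X}_t^2+\beta^4/(1-\mathbf{X}_t\beta^2)^2>0$ and blow-up as $\mathbf{X}_t\to0^+$); what the paper's route buys is brevity, since Lemmas \ref{lemma_struct_ellipse} and \ref{lemma_contact_points} were already established and immediately pin down the contact structure that your perturbation argument re-derives.
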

\begin{proof} 
	As outlined in the semi-infinite programming problem given by \eqref{semi_inf_ellipse_1}, we aim to derive expressions for $A(t)$ and $B(t)$ which minimize the cost function,
	\begin{equation} \label{eqn_minimization_of_vol}
	 - \log\:det(H(t))  = -\log(A(t)) - \log(B(t)),
	 \end{equation}
	 subject to the constraint that $\mathcal{K}(t) \subset E((0,0),H(t))$. This constraint can be translated into the condition that the quadratic function,
	\begin{equation} \label{eqn_quad_function}
	g(y) = A(t)(t^2-y^2) + B(t)(y^2) - 1,
	\end{equation}
	is non-positive for $|y| \leq \alpha(t)$. \par	
	As outlined in Lemma \ref{lemma_contact_points}, there are a total of four contact points, one in each quadrant. Thus, there must be at least two distinct points at which $g(y)$ is zero. The roots of the quadratic function $g(y)$ must lie at $y = \pm \alpha(t)$ (in no other situation can $g(y)$ take non-positive values in the given interval). So $g(y)$ can be alternatively expressed as,
	\begin{equation} \label{eqn_quad_generic}
	g(y) = \lambda(y-\alpha(t))(y+\alpha(t)), \text{ for some } \lambda \geq 0.
	\end{equation}
	By equating the coefficients in \eqref{eqn_quad_function} and \eqref{eqn_quad_generic}, the following constraints emerge,  
	\begin{align*}
	& B(t) - A(t) = \lambda \geq 0, & A(t)t^2 - 1 + \lambda\alpha(t)^2 = 0.
	\end{align*}
	Eliminating $\lambda$ from the equations, the two constraints are reduced to $B(t) \geq A(t)$ and $ A(t)t^2 - 1 + (B(t)-A(t))\alpha(t)^2 = 0$. Expressing $B(t)$ in terms of $A(t)$ using the second constraint, we get, 
	\begin{align} \label{eqn_bt_into_at}
	A(t) \leq \frac{1}{t^2}, \; B(t) = \frac{1 - A(t)(t^2-\alpha(t)^2)}{\alpha(t)^2}. 
	\end{align}
	Substituting $B(t)$ from equation \eqref{eqn_bt_into_at} into the cost function \eqref{eqn_minimization_of_vol}, and denoting $\mathbf{X}_t$ as the value taken by the function $A(\cdot)$ at time $t$, we obtain the optimization problem given in \eqref{semi_inf_ellipse_2}, whose point-wise minimizer in time gives the value of $A(t)$.
\end{proof}

The next theorem solves the convex optimization problem outlined above to obtain analytical expressions for the minimum area ellipses enclosing $\mathcal{K}(t)$ (Fig. \ref{fig_ellipseRt}).

\begin{thm} \label{theorem_ellipse_exp}
	The minimum area ellipse $\xi(\mathcal{K}(t))$ has the form $E(c,H(t))$, where $c=(0,0)$ and $H(t) = diag(A(t),B(t))$. $A(t)$ and $B(t)$ are given by the following expressions: \\
	
	\begin{enumerate}
		\item If $0 < t \leq \pi/2$, then
		\begin{equation}
		A(t) = \frac{1}{2(t^2-\alpha(t)^2)} \text{  and  } B(t) = \frac{1}{2\alpha(t)^2}, 
		\end{equation}
		where $\alpha(t) = 1-\cos(t)$. 
		\item If $\pi/2 < t \leq (1+\frac{1}{\sqrt{2}})(\pi-2)$, then
		\begin{equation}
		A(t) = \frac{1}{2(t^2-\alpha(t)^2)} \text{  and  } B(t) = \frac{1}{2\alpha(t)^2}, 
		\end{equation}
		where $\alpha(t) = t-\pi/2+1$.
		\item If $t > (1+\frac{1}{\sqrt{2}})(\pi-2)$, 
		\begin{equation}
		A(t) = \frac{1}{t^2} \text{  and  } B(t) = \frac{1}{t^2} 
		\end{equation}	
	\end{enumerate}
\end{thm}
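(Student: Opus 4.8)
The plan is to solve the scalar convex program from Lemma~\ref{lemma_convex_program} in closed form and then substitute back into the formula for $B(t)$. Abbreviate $\alpha = \alpha(t)$ and $\beta(t) = t^2 - \alpha(t)^2$, and note $\beta(t) > 0$ on $(0,\infty)$: for $t \in (0,\pi/2]$ the function $t - (1-\cos t)$ vanishes at $0$ and has nonnegative derivative $1-\sin t$, so $t > 1-\cos t = \alpha(t) \ge 0$; for $t > \pi/2$ one has $t > t - \pi/2 + 1 = \alpha(t) > 0$. On the feasible interval $0 < \mathbf{X}_t \le 1/t^2$ the argument of the second logarithm in \eqref{semi_inf_ellipse_2} equals $1 - \beta(t)\mathbf{X}_t$, which is decreasing in $\mathbf{X}_t$ and equals $\alpha^2/t^2 > 0$ at the right endpoint, so the objective $f_t(\mathbf{X}_t) = -\log \mathbf{X}_t - \log(1-\beta(t)\mathbf{X}_t) + \log \alpha^2$ is well defined and strictly convex there. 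First I would compute the unconstrained stationary point: $f_t'(\mathbf{X}_t) = -1/\mathbf{X}_t + \beta(t)/(1-\beta(t)\mathbf{X}_t) = 0$ forces $1 - \beta(t)\mathbf{X}_t = \beta(t)\mathbf{X}_t$, i.e. $\mathbf{X}_t^\star = \frac{1}{2\beta(t)} = \frac{1}{2(t^2-\alpha(t)^2)} > 0$. By convexity, the constrained minimizer $A(t)$ equals $\mathbf{X}_t^\star$ when $\mathbf{X}_t^\star \le 1/t^2$, and equals the boundary value $1/t^2$ otherwise; the lower constraint is never active.

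Next I would reduce the feasibility test $\mathbf{X}_t^\star \le 1/t^2$ to a condition on $t$: it is equivalent to $t^2 \le 2(t^2 - \alpha(t)^2)$, i.e. $\alpha(t) \le t/\sqrt{2}$, and I would check this regime by regime. For $0 < t \le \pi/2$, with $\alpha(t) = 1-\cos t$, set $g(t) = t/\sqrt{2} - (1-\cos t)$; then $g(0) = 0$, $g'(t) = 1/\sqrt{2} - \sin t$ vanishes only at $t = \pi/4$ on this interval, so $g$ increases on $(0,\pi/4]$ and decreases on $[\pi/4,\pi/2]$, while $g(\pi/2) = \pi/(2\sqrt{2}) - 1 > 0$; hence $g > 0$ on $(0,\pi/2]$ and $\mathbf{X}_t^\star$ is feasible. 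For $t > \pi/2$, with $\alpha(t) = t - \pi/2 + 1$, the inequality $t - \pi/2 + 1 \le t/\sqrt{2}$ rearranges to $t(1 - 1/\sqrt{2}) \le \pi/2 - 1$, and rationalizing the upper bound gives $t \le \frac{(\pi-2)/2}{(\sqrt{2}-1)/\sqrt{2}} = (1 + 1/\sqrt{2})(\pi-2)$. Thus $\mathbf{X}_t^\star$ is feasible exactly on $(\pi/2, (1+1/\sqrt{2})(\pi-2)]$, and for $t > (1+1/\sqrt{2})(\pi-2)$ the minimizer is $A(t) = 1/t^2$.

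Finally I would substitute into $B(t) = (1 - A(t)(t^2 - \alpha(t)^2))/\alpha(t)^2$ from Lemma~\ref{lemma_convex_program}. In Cases~1 and~2, $A(t) = \frac{1}{2(t^2-\alpha(t)^2)}$ so $A(t)(t^2-\alpha(t)^2) = \tfrac12$ and $B(t) = \frac{1}{2\alpha(t)^2}$, with $\alpha(t) = 1-\cos t$ or $\alpha(t) = t-\pi/2+1$ respectively. In Case~3, $A(t) = 1/t^2$ gives $A(t)(t^2-\alpha(t)^2) = 1 - \alpha(t)^2/t^2$, hence $B(t) = \frac{\alpha(t)^2/t^2}{\alpha(t)^2} = \frac{1}{t^2}$, so $\xi(\mathcal{K}(t))$ degenerates to the disk of radius $t$ (consistent with the $y$-bound of $\mathcal{K}(t)$ being inactive for such $t$). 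I would also note that the branches agree at $t = (1+1/\sqrt{2})(\pi-2)$, where $2\alpha(t)^2 = t^2$, so $A$ and $B$ are continuous across the cases. The main obstacle I expect is the scalar inequality $\alpha(t) \le t/\sqrt{2}$: the trigonometric branch on $(0,\pi/2]$ needs the short monotonicity argument above rather than a one-line estimate, and one must verify that the algebraic threshold from the linear branch coincides exactly with $(1+1/\sqrt{2})(\pi-2)$; the remaining stationarity computation and substitutions are routine.
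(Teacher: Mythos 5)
Your proposal is correct and follows essentially the same route as the paper's proof: solve the scalar convex program of Lemma \ref{lemma_convex_program} by setting the gradient to zero, obtain the interior stationary point $\frac{1}{2(t^2-\alpha(t)^2)}$, determine the time range where it satisfies $\mathbf{X}_t \leq 1/t^2$ (equivalently $t^2 \geq 2\alpha(t)^2$), take the boundary value $1/t^2$ otherwise, and substitute into the expression for $B(t)$. The only difference is cosmetic: you verify the inequality on $(0,\pi/2]$ with an explicit monotonicity argument and reach the threshold $(1+\tfrac{1}{\sqrt{2}})(\pi-2)$ via the linear inequality $\alpha(t) \leq t/\sqrt{2}$, whereas the paper asserts the first claim and works with the equivalent quadratic $t^2+4(1-\pi/2)t+2(1-\pi/2)^2 \leq 0$.
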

\begin{proof}
	We seek to calculate the minimizer to the convex cost function $f(\mathbf{X}_t)$ given in \eqref{semi_inf_ellipse_2}. In the interior of the constraint set, i.e., when 
	$\mathbf{X}_t < \frac{1}{t^2}$, the minimizer can be found by setting the gradient to zero and solving for $\mathbf{X}_t$:
	\begin{align*}
	& \nabla{f(\mathbf{X}_t)} = -\frac{1}{\mathbf{X}_t} + \frac{t^2 - \alpha(t)^2}{1-\mathbf{X}_t(t^2-\alpha^2)} = 0, \\
	& \mathbf{X}_t = \frac{1}{2(t^2-\alpha(t)^2)}
	\end{align*} \par
	For this expression to satisfy the constraint, the inequality $t^2 > 2\alpha(t)^2$ must be satisfied. This is true for all values of $t$ in the interval  $(0,\pi/2]$. For $t > \pi/2$, this holds whenever the function $t^2 + 4(1-\pi/2)t + 2(1-\pi/2)^2$ takes non-positive values. This is true for the time interval $0 < t \leq (1+\frac{1}{\sqrt{2}})(\pi-2)$. For all values of $t > (1+\frac{1}{\sqrt{2}})(\pi-2)$, the minimum in the feasible set is achieved when $\mathbf{X}_t = 1/t^2$. Evaluating $A(t)$ from \eqref{semi_inf_ellipse_2} and $B(t)$ from \eqref{eqn_bt_into_at}, we get the desired result.
\end{proof}
\begin{figure}[h]
	\centering
	\includegraphics[width=0.48\textwidth]{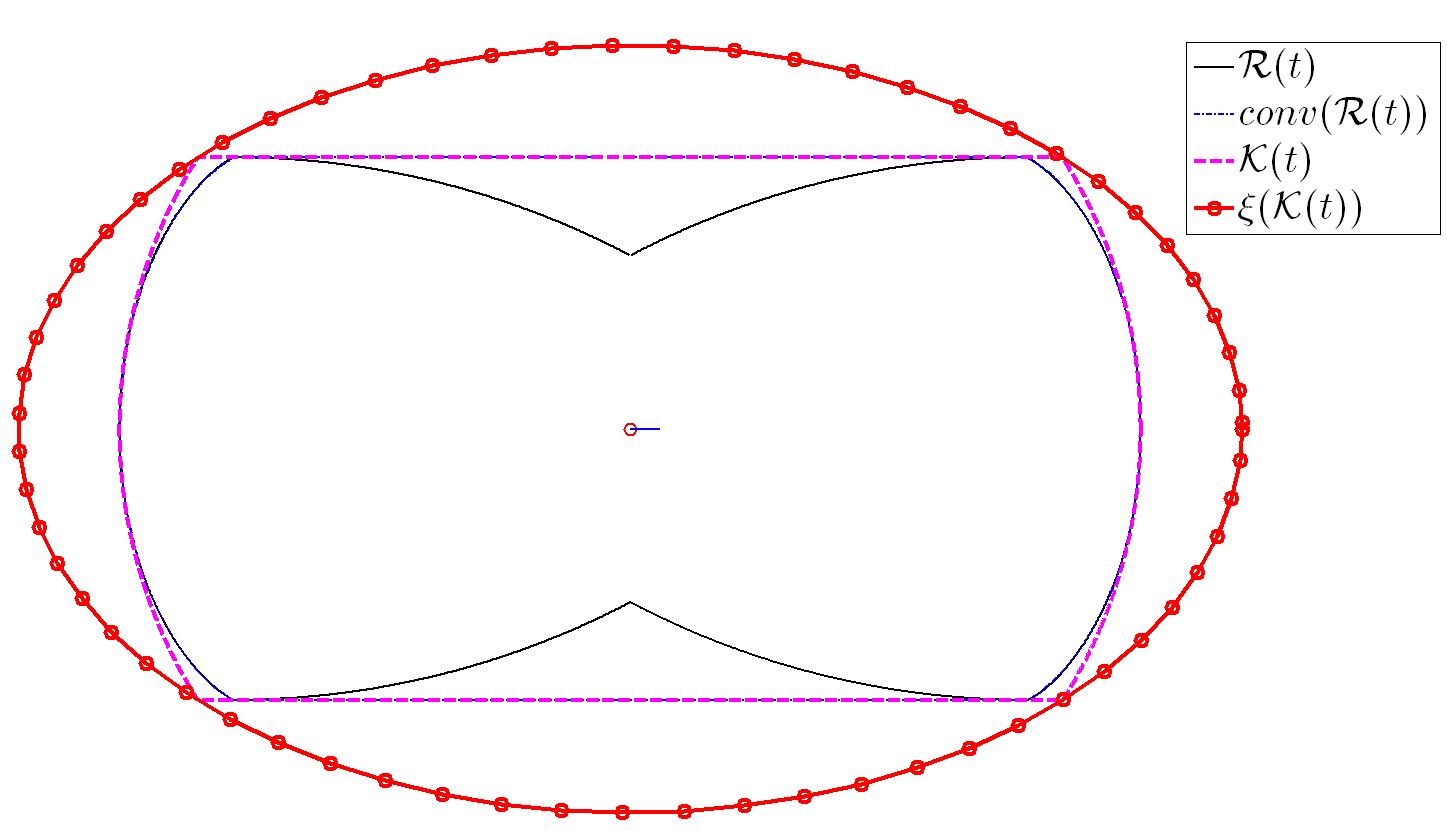}
	\caption{Minimum area ellipse enclosing $\mathcal{K}(t)$}
	\label{fig_ellipseRt}
\end{figure}
The following theorem justifies the ellipsoidal approximation by showing that the dissimilarity between $conv(\mathcal{R}(t))$ and $\xi(\mathcal{K}(t))$ asymptotically goes to zero as time grows larger.
\begin{thm}  \label{thm_convergence}
	Let $\mathcal{R}(t)$ denote the reachable set of a differential-drive robot with dynamics given in \eqref{unicycle_dynamics_normalized}, $conv(\mathcal{R}(t))$ denote its convex hull, $\mathcal{K}(t)$ denote an approximation of the convex hull as defined in Proposition \ref{prop_simple_R} and let $\xi(t)$ be the minimum area ellipse circumscribing $\mathcal{K}(t)$. Then,
	\begin{align*}
	\lim_{t \to \infty} d_J(conv(\mathcal{R}(t)),\xi(t)) = 0,
	\end{align*}
	where $d_J$ is the Jaccard distance. 
\end{thm}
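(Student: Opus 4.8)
The plan is to exploit the set inclusions $conv(\mathcal{R}(t)) \subseteq \mathcal{K}(t) \subseteq \xi(t)$, which collapse the Jaccard distance to a plain area ratio, and then to read off the leading-order behaviour of the two areas involved from results already proved. By Proposition \ref{prop_simple_R} we have $conv(\mathcal{R}(t)) \subseteq \mathcal{K}(t)$, and $\mathcal{K}(t) \subseteq \xi(t)$ because $\xi(t) = \xi(\mathcal{K}(t))$ is by construction an ellipse circumscribing $\mathcal{K}(t)$. Hence $conv(\mathcal{R}(t)) \cap \xi(t) = conv(\mathcal{R}(t))$ and $conv(\mathcal{R}(t)) \cup \xi(t) = \xi(t)$, so
\begin{equation*}
d_J\big(conv(\mathcal{R}(t)),\xi(t)\big) = 1 - \frac{\mathbf{A}(conv(\mathcal{R}(t)))}{\mathbf{A}(\xi(t))},
\end{equation*}
and it remains only to show this ratio tends to $1$. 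As in the proof of Proposition \ref{prop_simple_R}, I would evaluate the ratio using first-quadrant areas, which is legitimate because $conv(\mathcal{R}(t))$ (via the curve $\mathcal{C}^{++}(t)$ and its reflections) and $\xi(t)$ (by Lemma \ref{lemma_struct_ellipse}) are both symmetric about the $x$ and $y$ axes.

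Next I would pin down $\mathbf{A}(\xi(t))$ for large $t$. By Theorem \ref{theorem_ellipse_exp}, whenever $t > (1+\tfrac{1}{\sqrt 2})(\pi-2)$ we have $H(t) = t^{-2} I_2$, so $\xi(t)$ is exactly the disk of radius $t$ and its first-quadrant area equals $\tfrac{\pi}{4}t^2$. On the other side, \eqref{eqn_kt_approx_area} in the proof of Proposition \ref{prop_simple_R} gives $\mathbf{A}(conv(\mathcal{R}(t))) = \tfrac{\pi}{4}t^2\,(1+o(1))$. Dividing,
\begin{equation*}
\frac{\mathbf{A}(conv(\mathcal{R}(t)))}{\mathbf{A}(\xi(t))} = \frac{\tfrac{\pi}{4}t^2\,(1+o(1))}{\tfrac{\pi}{4}t^2} \longrightarrow 1 \qquad (t\to\infty),
\end{equation*}
which yields $d_J(conv(\mathcal{R}(t)),\xi(t)) \to 0$. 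Equivalently, the nesting gives $\mathbf{A}(conv(\mathcal{R}(t))) \le \mathbf{A}(\mathcal{K}(t)) \le \mathbf{A}(\xi(t)) \le \tfrac{\pi}{4}t^2$ in the first quadrant, the last bound because the disk of radius $t$ circumscribes $\mathcal{K}(t)$ and $\xi(\mathcal{K}(t))$ has minimal area; a squeeze then finishes the argument and incidentally recovers $d_J(\mathcal{K}(t),\xi(t)) \to 0$, and this route avoids explicit use of Theorem \ref{theorem_ellipse_exp}.

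I do not expect a genuine obstacle here: the analytic work has already been spent in Proposition \ref{prop_simple_R} (the closed form and $t\to\infty$ behaviour of $\mathbf{A}(conv(\mathcal{R}(t)))$) and in Theorem \ref{theorem_ellipse_exp} (that the optimal ellipse degenerates to the circumscribed disk once $t$ is large enough). The only points needing care are recording the inclusions so that the intersection and union in $d_J$ become trivial, and the symmetry bookkeeping that lets the ratio be computed in a single quadrant rather than over the whole plane.
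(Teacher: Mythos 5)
Your proof is correct, and it is organized differently from the paper's. The paper first establishes $d_J(\mathcal{K}(t),\xi(t)) \to 0$ (using Theorem \ref{theorem_ellipse_exp} to identify $\xi(t)$ with the disk of radius $t$ for large $t$, and \eqref{eqn_kt_approx_area} for $\mathbf{A}(\mathcal{K}(t))$), and then invokes the fact that $d_J$ is a metric so that the triangle inequality combines this with Proposition \ref{prop_simple_R} to bound $d_J(conv(\mathcal{R}(t)),\xi(t))$. You instead observe that the chain $conv(\mathcal{R}(t)) \subseteq \mathcal{K}(t) \subseteq \xi(t)$ collapses the target distance to the single ratio $1 - \mathbf{A}(conv(\mathcal{R}(t)))/\mathbf{A}(\xi(t))$, and then feed in the same two asymptotic ingredients ($\xi(t)$ a disk of radius $t$ for $t > (1+\tfrac{1}{\sqrt 2})(\pi-2)$, and $\mathbf{A}(conv(\mathcal{R}(t))) \sim \tfrac{\pi}{4}t^2$ per quadrant from \eqref{eqn_kt_approx_area}). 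What your route buys is economy: it never needs the metric (triangle-inequality) property of the Jaccard distance, and your squeeze variant, $\mathbf{A}(conv(\mathcal{R}(t))) \le \mathbf{A}(\xi(t)) \le \pi t^2/4$ per quadrant using only that the radius-$t$ disk is itself a circumscribing ellipse of $\mathcal{K}(t)$ and that $\xi(\mathcal{K}(t))$ is area-minimal, even dispenses with the exact expressions of Theorem \ref{theorem_ellipse_exp} (the quadrant reduction still requires the symmetry of $\xi(t)$ from Lemma \ref{lemma_struct_ellipse}, or you can work with total areas and skip symmetry altogether). What the paper's route buys is the intermediate statement $d_J(\mathcal{K}(t),\xi(t)) \to 0$ as an explicit byproduct, which quantifies separately how tight the ellipsoidal step of the approximation is; both arguments rest on the same computations, so neither is materially harder.
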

\begin{proof}
	We first compute $d_J(\mathcal{K}(t),\xi(t))$. Since $\mathcal{K}(t) \subseteq \xi(t)$, $\mathcal{K}(t) \cup \xi(t) = \xi(t)$ and $\mathcal{K}(t) \cap \xi(t) = \mathcal{K}(t)$. So,
	\begin{equation}
	d_J(\mathcal{K}(t),\xi(t)) = 1 - \frac{\mathbf{A}(\mathcal{K}(t))}{\mathbf{A}(\xi(t))}.
	\end{equation} \par
	Due to the asymptotic nature of the result, only values of time greater than $(1+\frac{1}{\sqrt{2}})(\pi-2)$ are considered. For these values of time, $\xi(t)$ is a circle of radius $t$ (see Theorem \ref{theorem_ellipse_exp}). Also, since both $\xi(t)$ and $\mathcal{K}(t)$ are symmetric about the $x$ and $y$ axes, it suffices to compute areas in the first quadrant. So, $\mathbf{A}(\xi(t)) = \pi t^2/4$. From \eqref{eqn_kt_approx_area}, we know that, for large values of $t$, $ \mathbf{A}(\mathcal{K}(t)) \approx \pi t^2/4$
	which is equal to the area of $\xi(t)$ in the first quadrant. 
	Thus, 
	\begin{equation} \label{eqn_jaccard_k_e}
	\lim_{t \to \infty} d_J(\mathcal{K}(t),\xi(t)) = 0.
	\end{equation} 
	Since the Jaccard distance is a metric distance, it obeys the triangle inequality, 
	\begin{multline*}
	d_J(conv(\mathcal{R}(t)),\xi(t)) \leq d_J(conv(\mathcal{R}(t)),\mathcal{K}(t)) \\ + d_J(\mathcal{K}(t),\xi(t)).
	\end{multline*} \par
	Applying Proposition \ref{prop_simple_R} and \eqref{eqn_jaccard_k_e}, the desired result is obtained.
\end{proof}
Fig. \ref{fig_fourRt} shows the evolution of $\mathcal{R}(t), conv(\mathcal{R}(t)), \mathcal{K}(t)$ and $\xi(\mathcal{K}(t))$ for different values of time. As predicted by the result, the dissimilarity between $conv(\mathcal{R}(t))$, $\mathcal{K}(t)$ and $\xi(\mathcal{K}(t))$ asymptotically goes to zero. \par
\begin{figure}[h]
	\centering
	\includegraphics[width=0.49\textwidth]{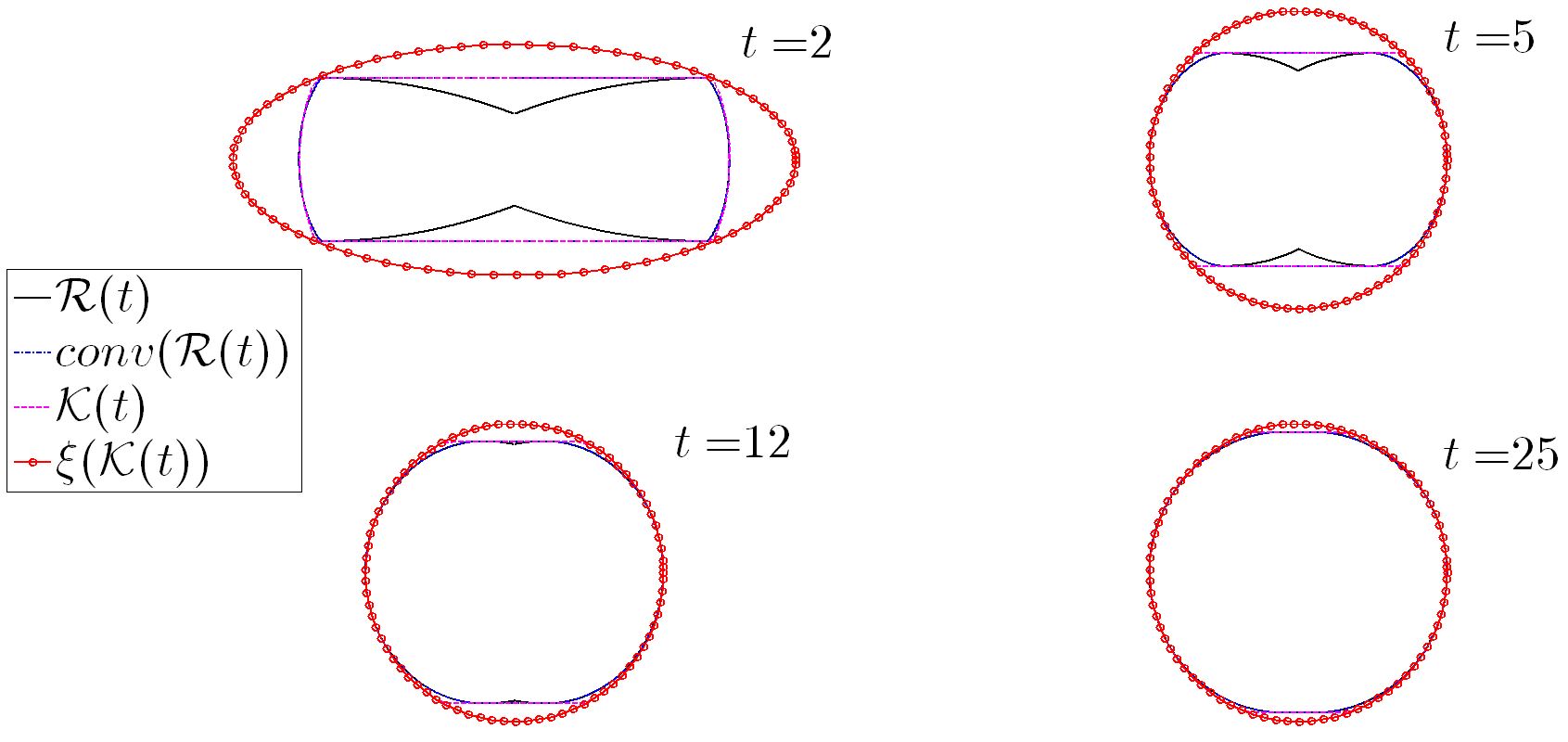}
	\caption{Evolution of $\mathcal{R}(t), conv(\mathcal{R}(t)), \mathcal{K}(t)$ and $\xi(\mathcal{K}(t))$ for $t = 2,5,12,25$. As predicted by Theorem \ref{thm_convergence}, the dissimilarity between $conv(\mathcal{R}(t)$ and $\xi(\mathcal{K}(t))$  asymptotically goes to zero. The scale for each figure is different.}
	\label{fig_fourRt}
\end{figure}
Using the ellipsoidal approximation of the reachable set developed in this section, Section \ref{sec_safe_time_horizon} outlines a safe time horizon based open-loop motion strategy for the robots.
\section{A Safe Open-Loop Motion Strategy} \label{sec_safe_time_horizon}
Let $\mathcal{M} = \{1,\ldots,N\}$ be a set of $N$ differential-drive robots, where each robot moves according to the dynamics specified in \eqref{unicycle_dynamics_normalized}. Let $(z_i(t),\phi_i(t))$ denote the configuration of robot $i \in \mathcal{M}$ at time $t$. At regular intervals of time $t_k = k\delta, k \in \mathbb{N}$, the central decision maker transmits the desired velocities $u_i(t_k) = (v_i(t_k),\omega_i(t_k))$ and the corresponding safe time horizon $s_i(t_k)$ to each robot $i \in \mathcal{M}$ via a wireless communication channel. $1/\delta$ is called the update frequency. \par
When a robot experiences communication failure, it executes the last received velocity command repeatedly for the duration of the corresponding safe time horizon. This causes the robot to follow a circular trajectory. Let $Z_i(\mu,t_k)$ denote the position of robot $i$ along this circular trajectory,
where $t_k$ is the time of the last received command and $\mu$ is the time elapsed since the communication failure.
The expressions for $Z_i(\mu,t_k)$ are obtained by integrating \eqref{unicycle_dynamics_normalized} for constant velocity inputs. If $\omega_i(t_k) \neq 0$,
\begin{multline}
Z_i(\mu,t_k) = z_i(t_k) + \\ \frac{v_i(t_k)}{\omega_i(t_k)} \begin{pmatrix} \sin(\omega_i(t_k)\mu + \phi_i(t_k)) - \sin(\phi_i(t_k)) \\ \cos(\phi_i(t_k)) - \cos(\omega_i(t_k)\mu + \phi_i(t_k)) \end{pmatrix}
\end{multline}
and if $ \omega_i(t_k) = 0$,
\begin{equation}
Z_i(\mu,t_k) = z_i(t_k) + \mu v_i(t_k) \begin{pmatrix} \cos(\phi_i(t_k)) \\ \sin(\phi_i(t_k)) \end{pmatrix}.
\end{equation} \par
In order to ensure the scalability and computational tractability of the safe time horizon algorithm, we introduce the notion of a neighborhood set for each robot. To do this, the safe time horizon for each robot is upper-bounded by a pre-specified value $L$. This allows us to introduce the neighborhood set of robot $i$ at time $t$ as:
\begin{align} \label{eqn_neighbor_set}
N_i(t) = \big \{j \in \mathcal{M}, j \neq i: \|z_i(t) - z_j(t)\| < 2L \big \},
\end{align}
where $\|.\|$ denotes the $l_2$ norm. If robot $i$ and robot $j$ are not neighbors, they cannot collide within the maximum safe time horizon $L$. \par %
The safe time horizon $s_i(t_k)$ can be defined as, 
\begin{equation}
s_i(t_k) = \min_{j \in N_i} s_{ij}(t_k),
\end{equation}
where $s_{ij}(t_k)$ is called the pair-wise safe time and is defined as,
\begin{align} \label{eqn_safe_time_def}
& s_{ij}(t_k) = \max_{\lambda} \int\limits_{0}^{\lambda} \; 1 \; d\lambda \\
\text{s.t. }Z_i(\mu,t_k) & \notin \mathcal{R}(\mu;z_j(t_k),\phi_j(t_k)), \forall \mu \in [0,\lambda] \nonumber \\
& \text{and } \lambda \leq L. \nonumber
\end{align}
Thus, the safe time horizon is the longest amount of time for which the trajectory of the robot after communication failure, does not intersect the reachable sets of its neighbors. But, as discussed in Section \ref{sec_approx_set}, an ellipsoidal approximation of the reachable set can be used to simplify set-membership tests. Thus, the definition of $s_{ij}(t_k)$ can be modified by replacing $\mathcal{R}(\mu;z_j(t_k),\phi_j(t_k))$ with $\xi_j(\mu,t_k)$ in \eqref{eqn_safe_time_def}, where $\xi_j(\mu,t_k)$ denotes the ellipsoidal approximation corresponding to $\mathcal{R}(\mu;z_j(t_k),\phi_j(t_k))$ as derived in Section \ref{sec_approx_set}. Next, we discuss how the safe time horizon is incorporated into the motion strategy of the robots. \par 
As discussed earlier, the central decision maker transmits $u_i(t_k)$ and $s_i(t_k)$ to all the robots $i \in \mathcal{M}$ at regular time intervals $t_k,\: k \in \mathbb{N}$. Let $c_i$ represent the status of the communication link of robot $i$:
\begin{align}
c_i(t_k) = \begin{cases}   1, \;\;\; \text{if } (u_i(t_k),s_i(t_k)) \text{ was received} \\ 
0, \;\;\; \text{if } (u_i(t_k),s_i(t_k)) \text{ was not received.} 
\end{cases} 
\end{align}
Algorithm \ref{alg_safe_time} outlines the motion strategy that robot $i$ employs. \par
\begin{algorithm}
	\caption{Safe Time Horizon based Open-Loop Motion Strategy}
	\label{alg_safe_time}
\begin{algorithmic}
	
	\State{$k = 1$, $l = 1$; $u_i(0) = 0, s_i(0) = 0$}
	\While{\texttt{true}}
	\If{$c_i(t_k) = 1$}
	\State{\texttt{Execute} $u_i(t_k)$}
	\State{$l=k$}
	\ElsIf {$t_k - t_l < s_i(t_{l})$}
	\State{\texttt{Execute} $u_i(t_{l})$} 
	\Else
	\State{\texttt{Stop Moving}} 
	\EndIf
	\State{$k = k + 1$}
	\EndWhile
\end{algorithmic}
\end{algorithm}
Fig. \ref{fig:safeTimeRationale} illustrates the rationale behind the safe time horizon algorithm. As long as the robot experiencing communication failure is outside the ellipsoidal reachable sets of its neighbors, it can safely move. The end of the safe time horizon corresponds to the time when the robot reaches the boundary of one of the ellipses. At this point, the robot stops moving.
\begin{figure}[h]
	\centering
	\framebox[0.46\textwidth]{\includegraphics[width=0.45\textwidth]{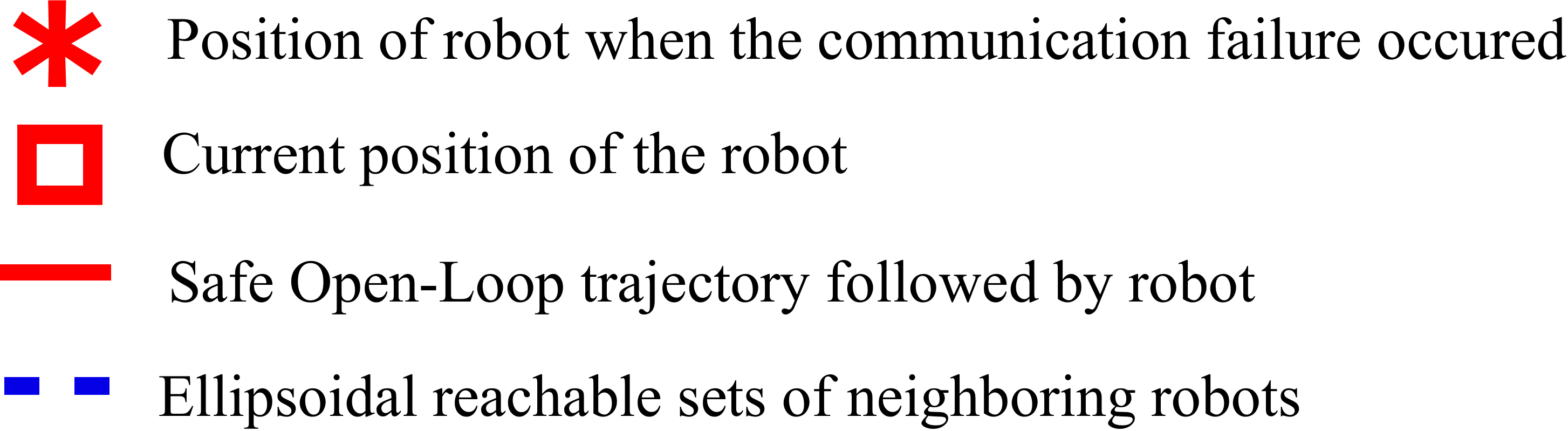}} \\
	\subfloat[][$t = 0.005$]{
	\includegraphics[width=0.22\textwidth,height = 0.18\textwidth]{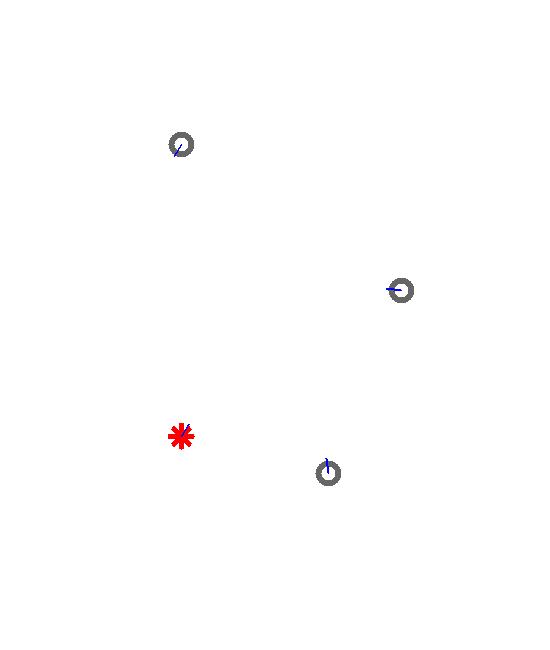}
	\label{subfig_a}}
 	\rulesep
	\subfloat[][$t = 3$]{
	\includegraphics[width=0.22\textwidth,height = 0.18\textwidth]{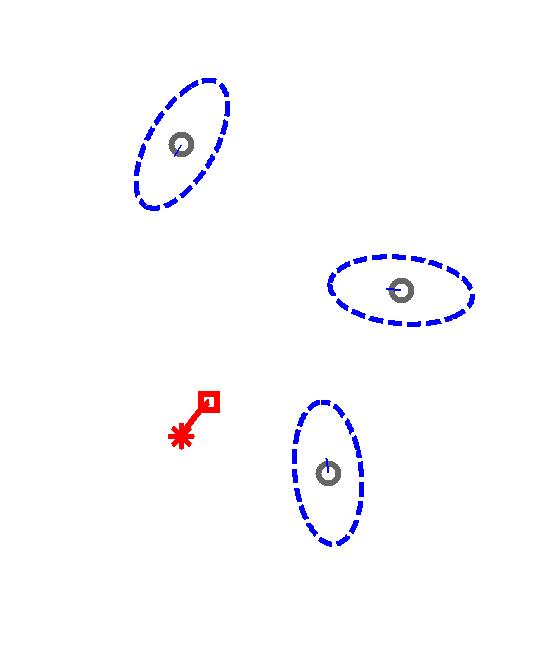}
	\label{subfig_b}}
	\rulehorzsep \\ 
	\subfloat[][$t = 5$]{
	\includegraphics[width=0.22\textwidth,height = 0.18\textwidth]{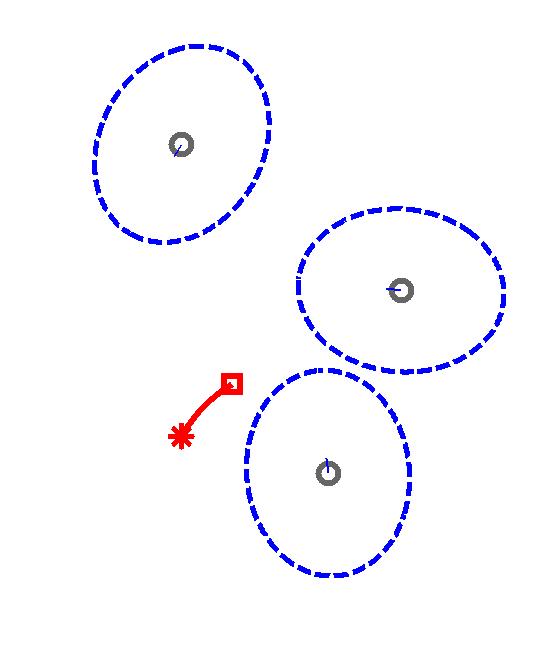}
	\label{subfig_c}}
	\rulesep
	\subfloat[][$t = 5.96$]{
	\includegraphics[width=0.22\textwidth,height = 0.18\textwidth]{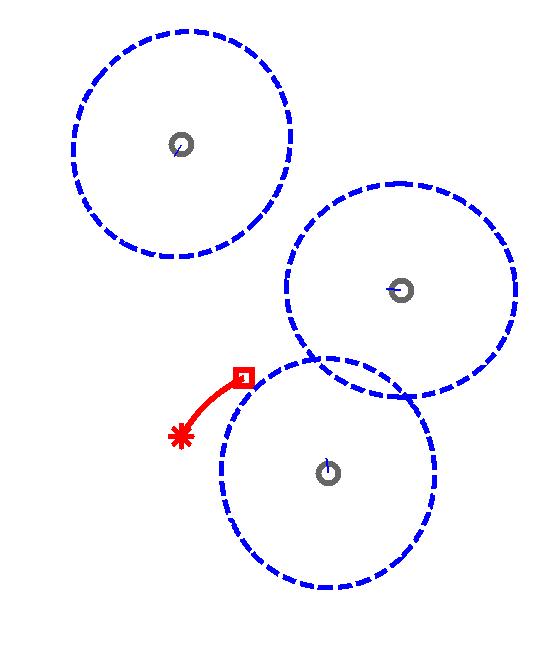}
	\label{subfig_d}}	
	\caption{The safe time horizon represents the longest time duration for which the robot lies outside the ellipsoidal reachable sets of other robots. Thus, the robot experiencing communication failure can execute its last received velocity command for the corresponding safe time horizon and remain safe. Beyond this, the robot stops moving.}
	\label{fig:safeTimeRationale}
\end{figure} \par
In order to state formal safety guarantees regarding Algorithm \ref{alg_safe_time}, we make mild assumptions on the capability of the control algorithm executing on the central decision maker. We assume that, the control algorithm ensures collision avoidance between communicating robots as well as between communicating robots and stationary obstacles. In particular, if $\exists i,j \in \mathcal{M}$ such that 
$c_i(t_k) = 1$ and $c_j(t_k) = 1$, then $u_i(t_k)$ and $u_j(t_k)$ guarantee that,
\begin{align}
\|z_i(t_k) - z_j(t_k) \| > 0 \implies \|z_i(t_{k+1}) - z_j(t_{k+1})\| > 0.
\end{align}
Let $z_O$ denote the position of a stationary obstacle. If $c_i(t_k) = 1$ for any $ i \in  \mathcal{M}$, 
\begin{align} \label{obs_control_guarantee}
\|z_i(t_k) - z_O \| > 0 \implies \|z_i(t_{k+1}) - z_O\| > 0.
\end{align}
Utilizing these assumptions, the following theorem outlines the safety guarantees provided by Algorithm \ref{alg_safe_time}. 
\begin{thm}
	If robot $i$ does not receive any commands from the central decision maker after time $t_k$, i.e., $c_i(t_k) = 1$ and $c_i(t_m) = 0$ $\forall m > \;k$, then Algorithm \ref{alg_safe_time} ensures that,  
	\begin{align*}
	\|z_i(t_k) - z_j(t_k)\| > 0 & \implies \|z_i(t_k + \mu) - z_j(t_k + \mu)\| > 0 , \\
	& \forall \mu \in [0,s_i(t_k)], \; \forall j \in \mathcal{M}, j \neq i.
	\end{align*}
\end{thm}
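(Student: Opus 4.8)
The plan is to reduce the statement to the defining property of the safe time horizon. Two facts drive the argument: (i) under the stated hypothesis, Algorithm~\ref{alg_safe_time} forces robot $i$ onto the explicit open-loop arc $Z_i(\cdot,t_k)$ throughout $[0,s_i(t_k)]$; and (ii) irrespective of whether robot $j$ ever receives commands again, its position at time $t_k+\mu$ must lie in $\mathcal{R}(\mu;z_j(t_k),\phi_j(t_k))$. The horizon $s_i(t_k)$ was constructed precisely so that these two objects are disjoint, so the conclusion will follow.

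First I would establish (i). Since $c_i(t_k)=1$ and $c_i(t_m)=0$ for all $m>k$, the index $l$ of the last successful reception stays equal to $k$ at every subsequent iteration, so the algorithm keeps entering the branch that re-executes $u_i(t_k)$ while the elapsed time is below $s_i(t_k)$. Hence robot $i$ applies the constant input $u_i(t_k)$ on all of $[t_k,t_k+s_i(t_k)]$, and integrating \eqref{unicycle_dynamics_normalized} from $(z_i(t_k),\phi_i(t_k))$ with this constant input gives $z_i(t_k+\mu)=Z_i(\mu,t_k)$ for every $\mu\in[0,s_i(t_k)]$. (Sampling is harmless here: robot $i$ may coast slightly past $s_i(t_k)$ because the stopping test is checked only at the grid times $t_m$, but it has certainly not stopped before $s_i(t_k)$.)

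Next I would split on whether $j$ is a neighbor of $i$ at time $t_k$. If $j\in N_i(t_k)$, then $s_i(t_k)=\min_{j'\in N_i(t_k)} s_{ij'}(t_k)\le s_{ij}(t_k)$, so $[0,s_i(t_k)]\subseteq[0,s_{ij}(t_k)]$ and the constraint in the modified (ellipsoidal) version of \eqref{eqn_safe_time_def} yields $Z_i(\mu,t_k)\notin\xi_j(\mu,t_k)$ for all such $\mu$. Using Proposition~\ref{prop_simple_R}, the fact that the minimum-area circumscribing ellipse contains its set by construction, and the rigid-motion identity $\mathcal{R}(\mu;z_j(t_k),\phi_j(t_k))=z_j(t_k)+\Pi_{\phi_j(t_k)}\mathcal{R}(\mu)$ from Section~\ref{sec_reachability_analysis}, we obtain $\mathcal{R}(\mu;z_j(t_k),\phi_j(t_k))\subseteq conv(\mathcal{R}(\mu;z_j(t_k),\phi_j(t_k)))\subseteq\xi_j(\mu,t_k)$, hence $Z_i(\mu,t_k)\notin\mathcal{R}(\mu;z_j(t_k),\phi_j(t_k))$. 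On the other hand, whatever robot $j$ does after $t_k$ — track a fresh command, replay an old one, or remain stopped — its motion is a feasible solution of \eqref{unicycle_dynamics_normalized} issued from $(z_j(t_k),\phi_j(t_k))$, so by definition of the reachable set $z_j(t_k+\mu)\in\mathcal{R}(\mu;z_j(t_k),\phi_j(t_k))$. Since $z_i(t_k+\mu)$ lies outside this set while $z_j(t_k+\mu)$ lies inside it, $\|z_i(t_k+\mu)-z_j(t_k+\mu)\|>0$; at $\mu=0$ this is exactly the hypothesis $\|z_i(t_k)-z_j(t_k)\|>0$, which is also what makes $s_{ij}(t_k)$ well defined.

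Finally, if $j\notin N_i(t_k)$, then $\|z_i(t_k)-z_j(t_k)\|\ge 2L$ by \eqref{eqn_neighbor_set}; since each robot has speed at most $1$ and $\mu\le s_i(t_k)\le L$, both robots move a Euclidean distance at most $L$ over $[t_k,t_k+\mu]$, and a triangle-inequality estimate keeps $\|z_i(t_k+\mu)-z_j(t_k+\mu)\|$ positive, so far-apart robots cannot collide within the horizon. Taking the two cases together over all $j\neq i$ proves the theorem. The one genuinely substantive step — everything else being bookkeeping about which branch of Algorithm~\ref{alg_safe_time} fires and a crude distance bound — is the neighbor case: arguing that robot $j$'s position is trapped in $\mathcal{R}(\mu;z_j(t_k),\phi_j(t_k))$ regardless of $j$'s communication status, and chaining the containments $\mathcal{R}\subseteq conv(\mathcal{R})\subseteq\xi$ so that the conservative ellipsoid-based set-membership test actually used to compute $s_{ij}$ certifies separation from the true, non-convex reachable set.
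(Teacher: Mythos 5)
Your proof is correct and takes essentially the same route as the paper: split on $j\in N_i(t_k)$ versus $j\notin N_i(t_k)$, use the definition of $s_i(t_k)$ (via the ellipsoidal sets) to keep $z_i(t_k+\mu)$ outside $\xi_j(\mu,t_k)$, and use the definition of the reachable set to keep $z_j(t_k+\mu)$ inside it, so the two positions cannot coincide. The extra steps you supply --- checking that Algorithm \ref{alg_safe_time} pins robot $i$ to the arc $Z_i(\cdot,t_k)$ on $[0,s_i(t_k)]$, the containment chain $\mathcal{R}\subseteq conv(\mathcal{R})\subseteq\xi_j$, and the speed-bound estimate for non-neighbors --- are just explicit versions of steps the paper leaves implicit.
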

\begin{proof}
	Since $\|z_i(t_k) - z_j(t_k)\| > 0 \; \forall j \in \mathcal{M}$, $s_i(t_k) > 0$. As seen in \eqref{eqn_neighbor_set}, if $j \notin N_i$, then a collision is not possible between robot $i$ and $j$ within the safe time horizon. Next, we consider the case when $j \in N_i$. From the definition of $s_i(t_k)$, we know that, $ z_i(t_k + \mu) \notin \xi_j(\mu,t_k), \; \forall \mu \in [0,s_i(t_k)]$. Furthermore, from the definition of reachable sets, $z_j(t_k + \mu) \in \xi_j(\mu,t_k), \; \forall \mu \in [0,s_i(t_k)]$. From the previous two statements, it is clear that $z_i(t_k + \mu) \neq z_j(t_k + \mu), \;\; \forall \mu \in [0,s_i(t_k)]$. Hence, 
	\begin{equation}
	\|z_i(t_k + \mu) - z_j(t_k + \mu)\| > 0, \;\;\; \forall \mu \in [0,s_i(t_k)], \; \forall j \in N_i.
	\end{equation}
	This completes the proof. 
\end{proof}
Beyond the safe time horizon, the robot stops moving, and \eqref{obs_control_guarantee} ensures that no collisions occur with the stationary robot. Thus, the original safety guarantee of the control algorithm is extended to situations where the robot is moving without commands from the central decision maker within the safe time horizon. \par
\section{Results} \label{sec_results}
\subsection{Simulations} \label{sec_sim}
This section presents the simulation results of the safe time horizon algorithm implemented on a team of 6 robots. Fig. \ref{fig_simJerkyMotion} compares the motion of the robots during a communication failure with and without the safe time horizon algorithm. A communication failure is simulated lasting from $t=3.1s$ to $t=8.3s$. In the case where safe time horizons are not utilized, shown by Fig. \ref{jerkyMotion_a} and Fig. \ref{jerkyMotion_b}, the robots experiencing communication failure abruptly stop moving, thus exhibiting a jerky motion pattern. When safe time horizons are utilized, the robots experiencing communication failure execute their last received velocity command for the duration of the safe time horizon (Fig. \ref{jerkyMotion_c} and Fig. \ref{jerkyMotion_d}). This allows them to keep moving during the communication failure, thereby avoiding jerky ``start-stop" motion behaviors and reducing the disruption caused to the multi-robot system.  
 \begin{figure}[h]
 	\centering
 	\framebox[0.46\textwidth]{\includegraphics[width=0.44\textwidth]{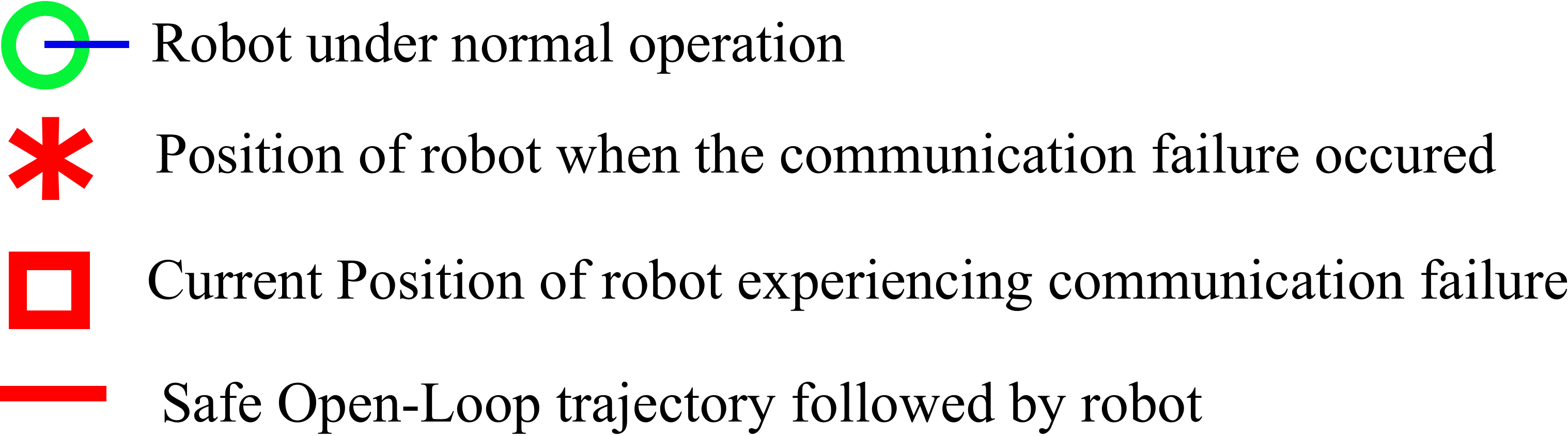}} \\
 	\subfloat[][No Safe Times: $t = 3.1s$]{
 		\includegraphics[width=0.22\textwidth,height=0.14\textwidth]{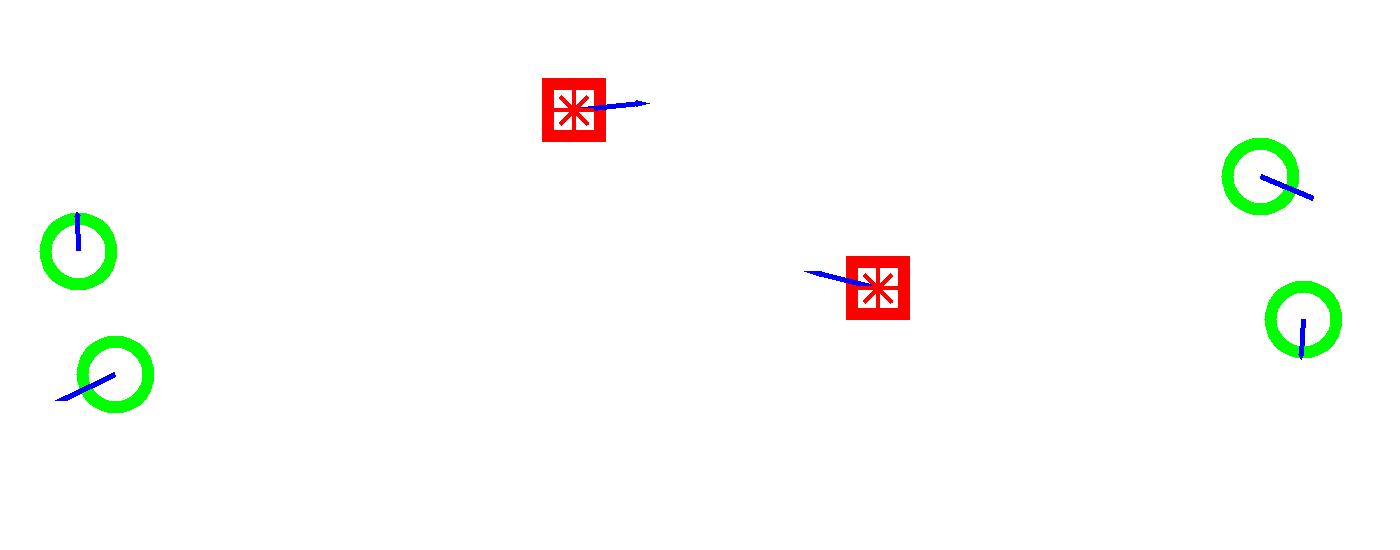}
 		\label{jerkyMotion_a}}
 	\rulesep
 	\subfloat[][No Safe Times: $t =8.3s$]{
 		\includegraphics[width=0.22\textwidth,height=0.14\textwidth]{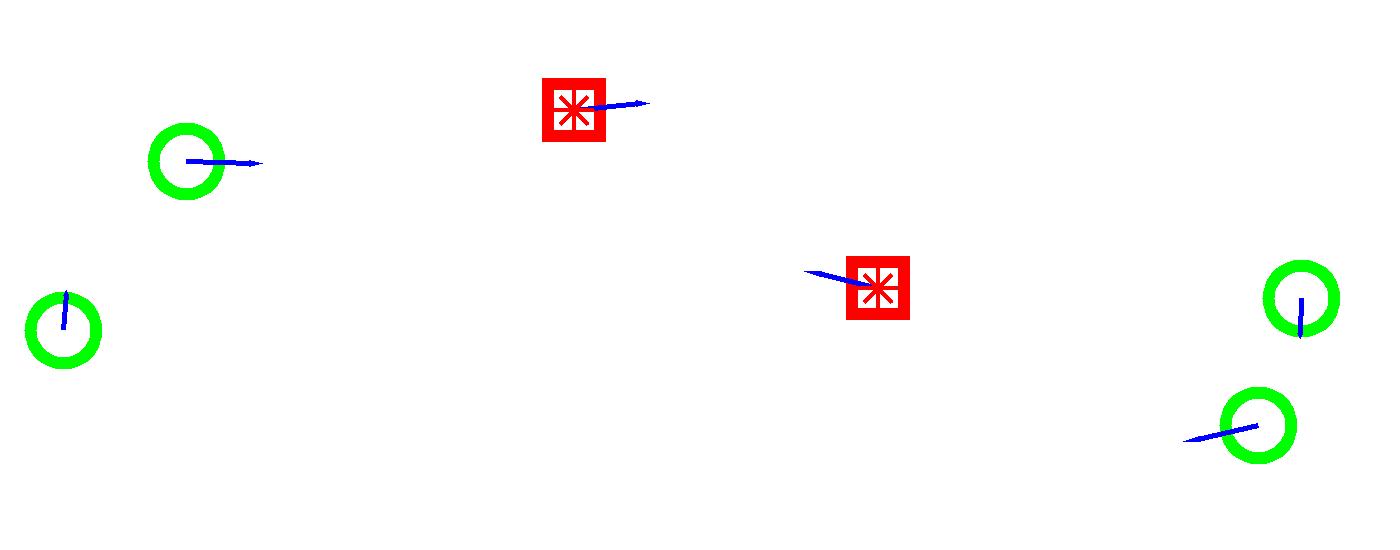}
 		\label{jerkyMotion_b}}
 	\rulehorzsep \\ 
 	\subfloat[][With Safe Times: $t = 3.1s$]{
 		\includegraphics[width=0.22\textwidth,height=0.14\textwidth]{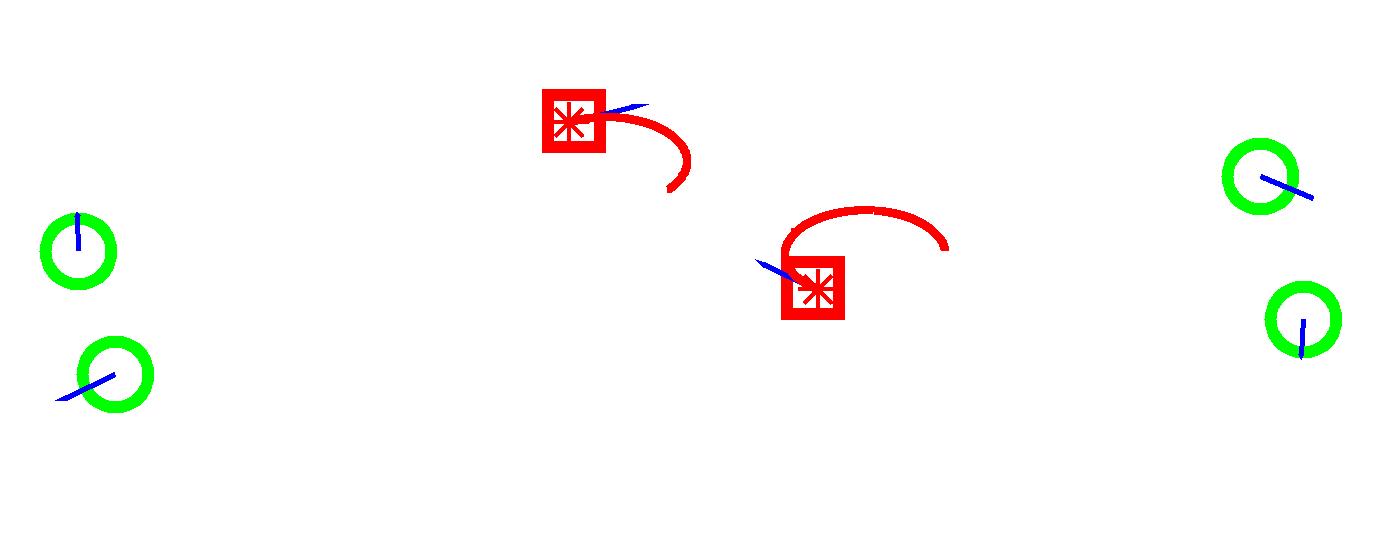}
 		\label{jerkyMotion_c}}
 	\rulesep
 	\subfloat[][With Safe Times: $t = 8.3s$]{
 		\includegraphics[width=0.22\textwidth,height=0.14\textwidth]{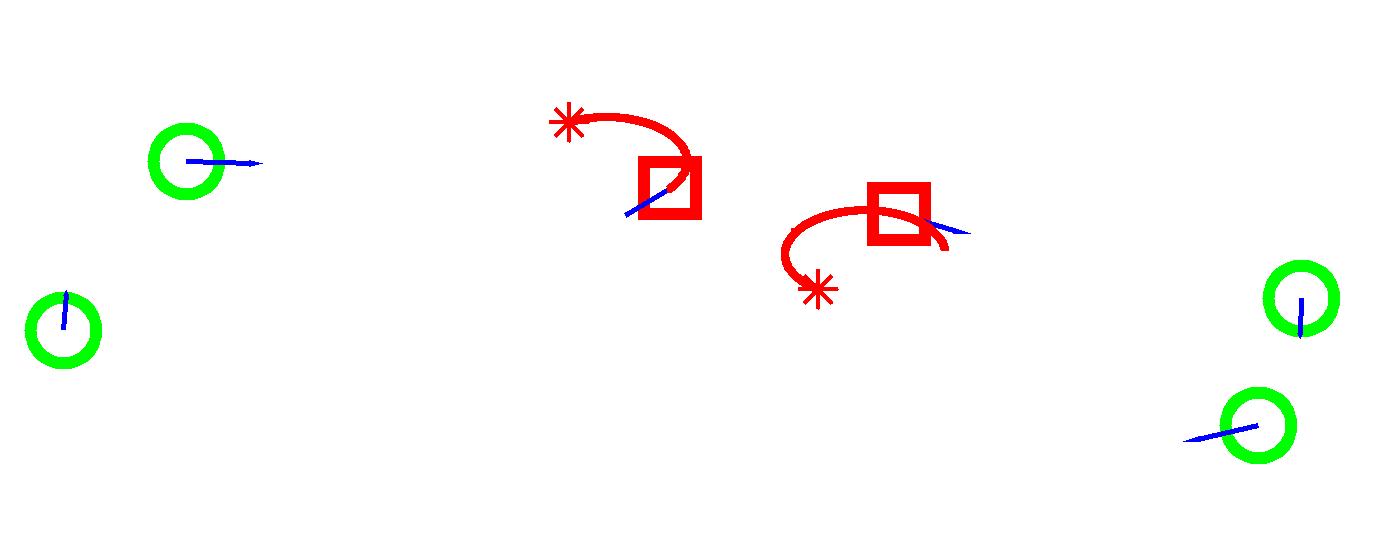}
 		\label{jerkyMotion_d}}
 	\caption{Comparison of the motion of robots with and without safe time horizons. Two robots experience communication failure from $t=3.1s$ to $t=8.3s$. In the case when safe time horizons are not used (Fig. \ref{jerkyMotion_a} and Fig. \ref{jerkyMotion_b}), the robots exhibit jerky motion behavior, since they abruptly stop during the communication failure. When safe time horizons are used (Fig. \ref{jerkyMotion_c} and Fig. \ref{jerkyMotion_d}), the robots continue moving by executing their last received velocity command for the corresponding safe time horizon, thus demonstrating the ability of the safe time horizon algorithm to effectively handle communication failures.}
 	\label{fig_simJerkyMotion}
 \end{figure}
\subsection{Experimental Results}
 The safe time horizon algorithm is implemented on a multi-robot testbed with 4 Khepera III robots and an Optitrack motion capture system, which is connected to a desktop computer serving as the central decision maker. In the first scenario (Fig. \ref{scene1_a} and Fig. \ref{scene1_b}), the robot experiencing communication failure executes its last received velocity command for the duration of the safe time horizon, after which it becomes stationary. In the second scenario, (Fig. \ref{scene1_c} and Fig. \ref{scene1_d}), the robot experiences a short duration communication failure and keeps moving safely through it. In particular, the safe time horizon algorithm successfully combats issues pertaining to intermittent communications on the Robotarium, and enables the seamless execution of coordination algorithms in the face of such failures.
 	\begin{figure}[h]
 		\centering
 		\framebox[0.46\textwidth]{\includegraphics[width=0.44\textwidth]{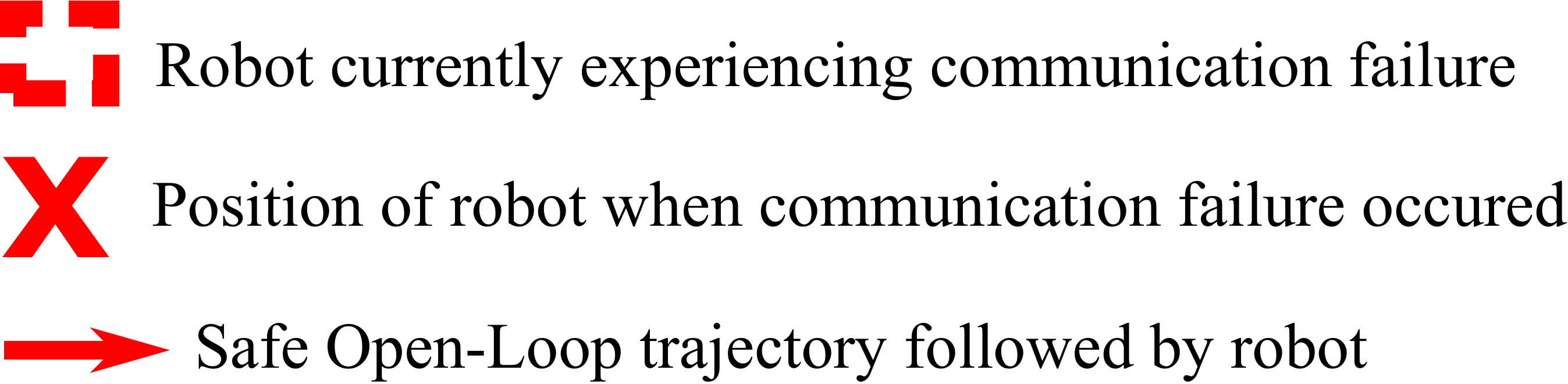}} \\
 		\subfloat[][Robots at $t = 9s$]{
 			\includegraphics[width=0.225\textwidth,height=0.15\textwidth]{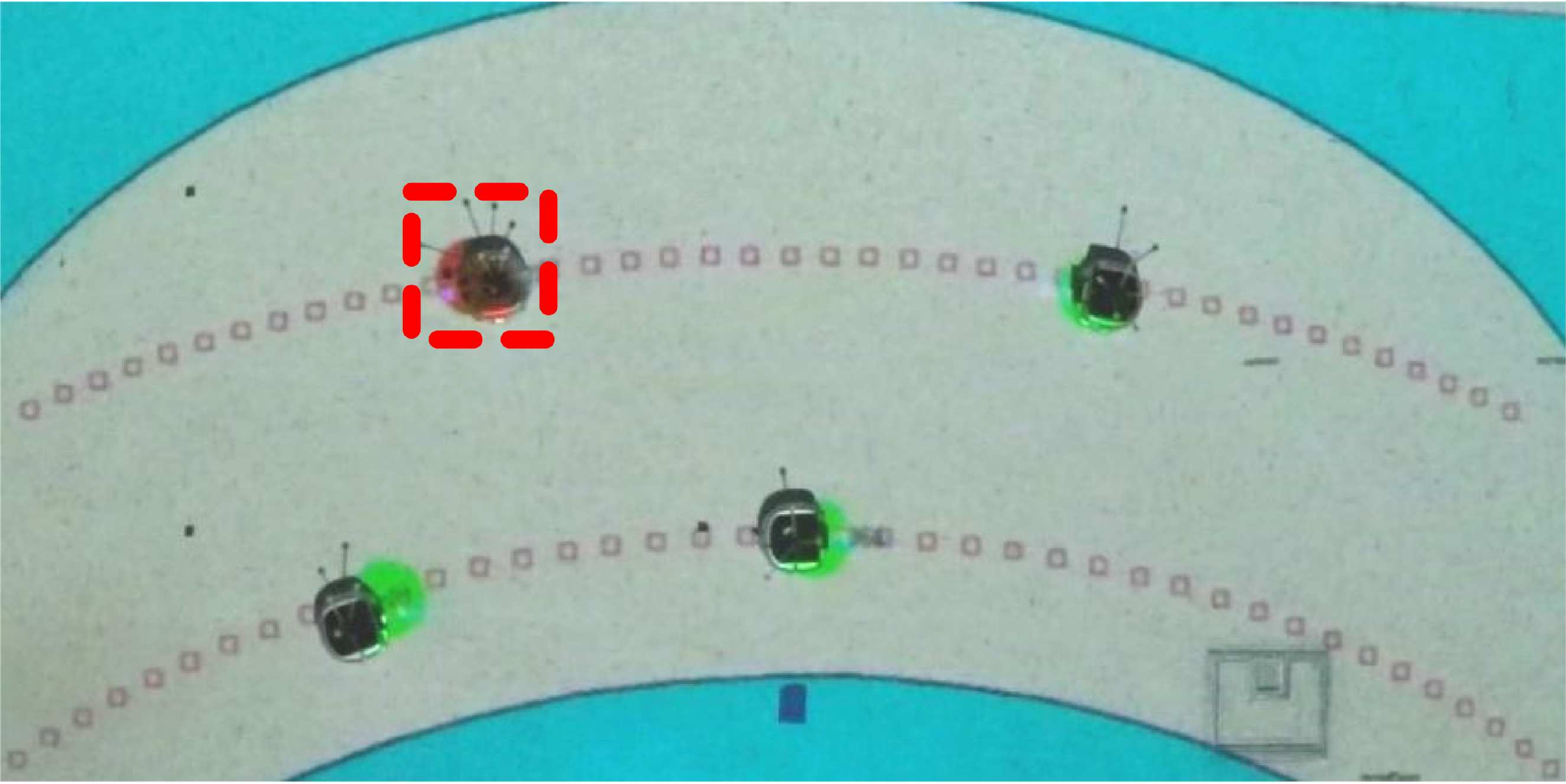}
 			\label{scene1_a}}
 		\subfloat[][Robots at $t = 14s$]{
 			\includegraphics[width=0.225\textwidth,height=0.15\textwidth]{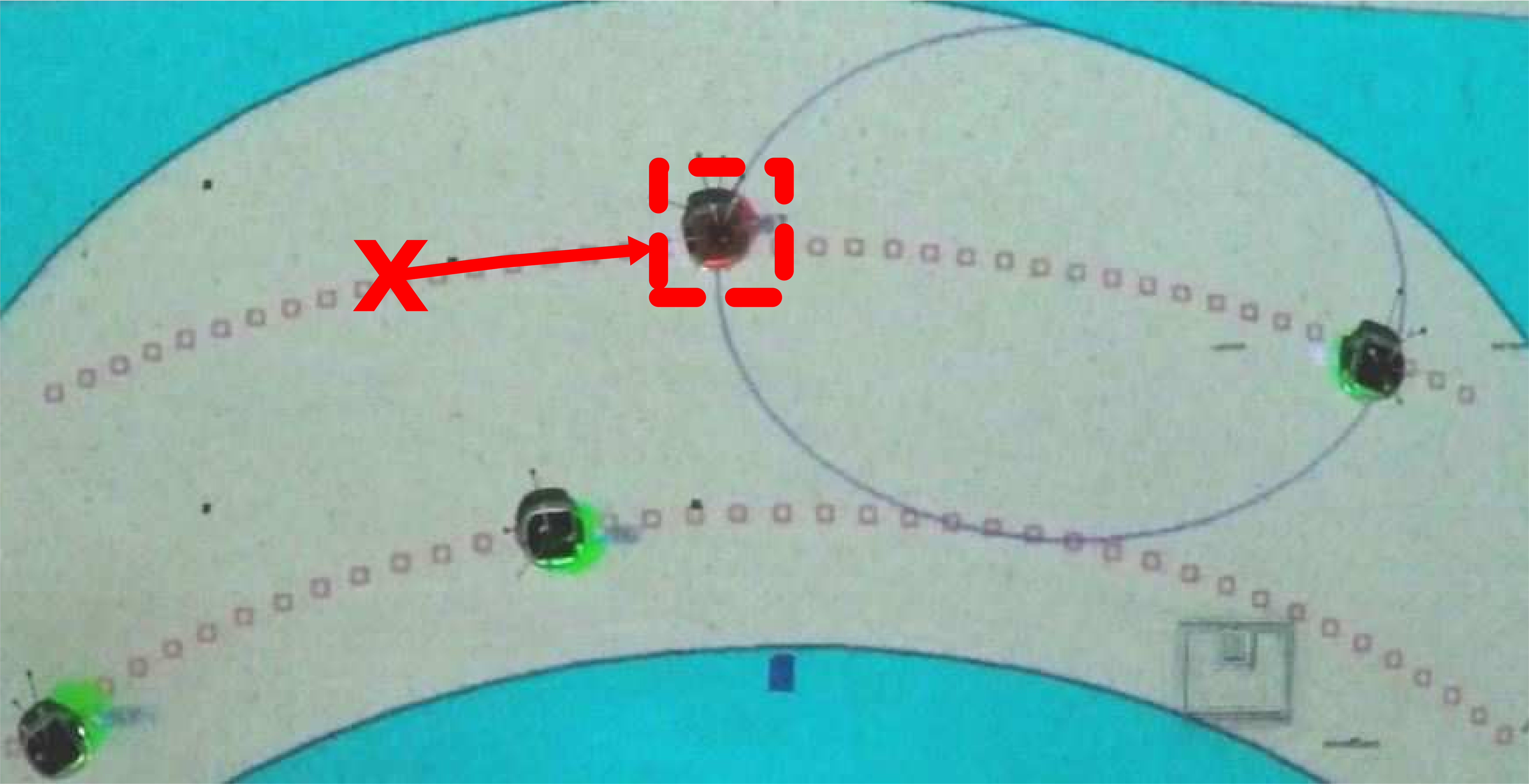}
 			\label{scene1_b}}
 		\\
 		\subfloat[][Robots at $t = 28s$]{
 			\includegraphics[width=0.225\textwidth,height=0.15\textwidth]{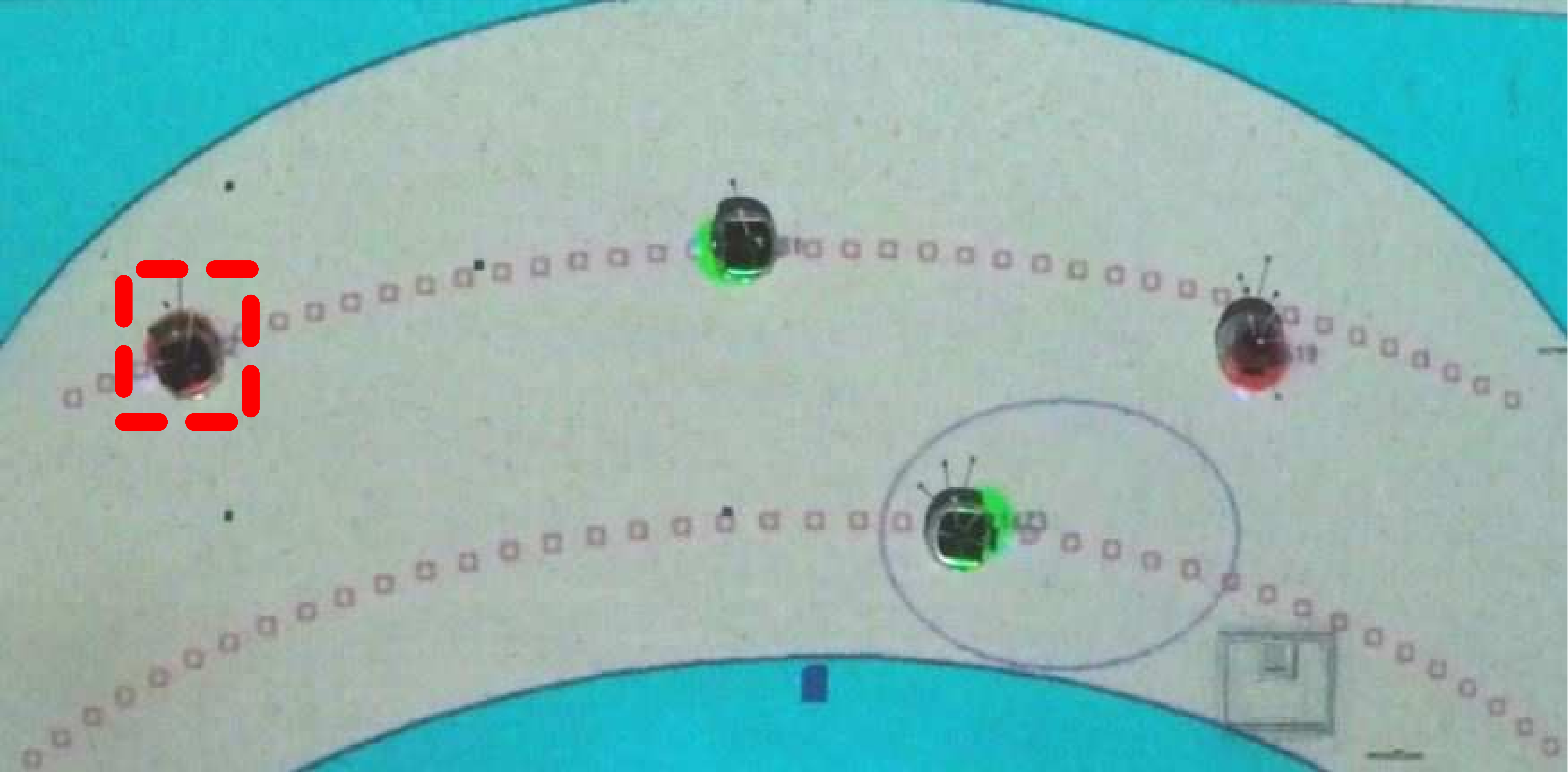}
 			\label{scene1_c}}
 		\subfloat[][Robots at $t = 32s$]{
 			\includegraphics[width=0.225\textwidth,height=0.15\textwidth]{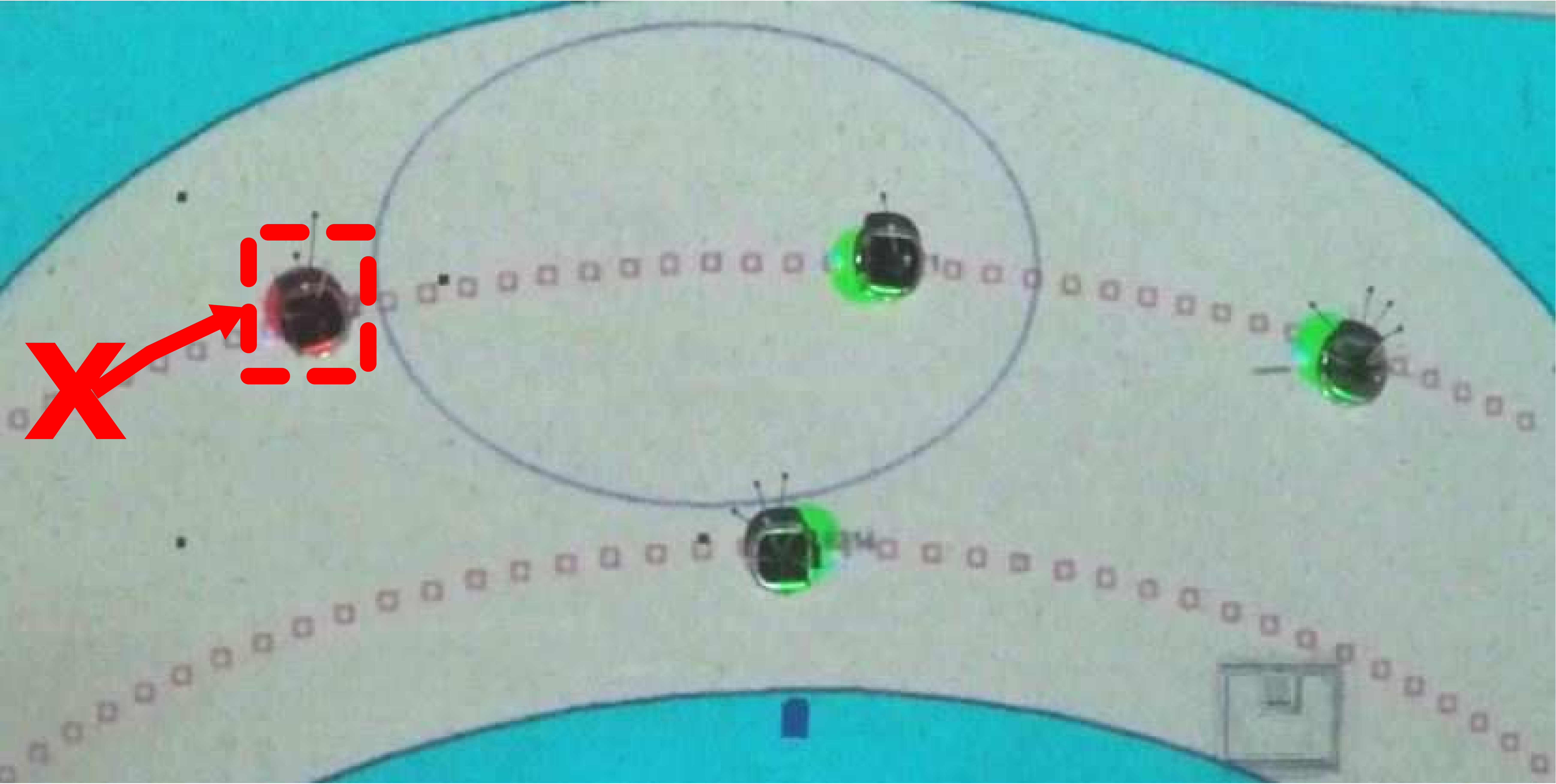}
 			\label{scene1_d}}
 		\caption{Four Khepera III robots are shown patrolling a U-shaped corridor. At $t=9s$, a robot experiences communication failure and executes its last received velocity command until $t=14s$. Similarly, a robot loses communication at $t=28s$ (Fig. \ref{scene1_c}) and keeps moving in a safe manner. Before the safe time horizon of the robot elapses, communication is restored (Fig. \ref{scene1_d}). Thus, the safe time horizon algorithm prevents jerky ``start-stop" motion patterns of the robot during the intermittent communication failure. This experiment demonstrates how the safe time horizon algorithm can prevent disruptions in robot motion caused due to intermittent communications on the Robotarium. A video of this experiment can be found at \texttt{www.youtube.com/watch?v=Gyz861xwaHY}}
 		\label{fig_scenario1}
 	\end{figure}
\section{Conclusions} \label{sec_conclusion}
The safe time horizon algorithm not only provides a technique for multi-robot systems to safely handle intermittent communication failures, it demonstrates the feasibility of reachability analysis as a powerful tool for multi-robot algorithms. The minimum area ellipse derived in Section \ref{sec_approx_set} provides a compact and efficient way to represent the reachable set of a differential-drive robot and can be used in other robotics algorithms as an abstraction of the reachable set itself.
\bibliography{references}
\bibliographystyle{plain}
\end{document}